\newcommand{\opt}{{\sf Opt}}
\newcommand{\alg}{{\sf Alg}}
\newcommand{\prob}[1]{\operatorname{Pr}\left[#1\right]}
\newcommand{\ex}[2][]{\operatorname{E}_{#1}\left[#2\right]}
\newenvironment{proof}{\noindent{\bf Proof : \ }}{\hfill$\Box$\par\medskip}
\newtheorem{theorem}{Theorem}
\newtheorem{corollary}[theorem]{Corollary}
\newtheorem{lemma}[theorem]{Lemma}
\newtheorem{definition}[theorem]{Definition}
\begin{document}
% The file aaai.sty is the style file for AAAI Press 
% proceedings, working notes, and technical reports.
%
\title{Online Pandora's Boxes and Bandits}
\author{ Hossein Esfandiari \thanks{ {\em esfandiari@google.com}}\\Google Research 
	\\ New York, NY
	\And MohammadTaghi HajiAghayi \thanks{{\em hajiagha@cs.umd.edu}}  \\University of Maryland\\ 
	Department of Computer Science
	\\ College Park, MD 
	\And
	Brendan Lucier \thanks{{\em brlucier@microsoft.com} } \\Microsoft Research
	\\ Cambridge, MA  
	\AND
	Michael Mitzenmacher\thanks{{\em michaelm@eecs.harvard.edu}}\\ Harvard University \\ School of Engineering and Appplied Science\\
	Cambridge, MA
}

\maketitle

\begin{abstract}
We consider online variations of the Pandora's box problem \cite{weitzman79},
a standard model for understanding issues related to the cost of acquiring
information for decision-making.  Our problem generalizes both the classic 
Pandora's box problem and the prophet inequality framework.
Boxes are presented online, each with a random value and cost drawn
jointly from some known distribution. Pandora chooses online whether to
open each box given its cost, and then chooses irrevocably whether to keep 
the revealed prize or pass on it.  We aim
for approximation algorithms against adversaries that can choose the
largest prize over any opened box, and use optimal offline policies to
decide which boxes to open (without knowledge of the value
inside)\footnote{See section~\ref{sec:reduction} for formal definitions.}.  
We consider variations where Pandora can collect
multiple prizes subject to feasibility constraints, such as cardinality, 
matroid, or knapsack constraints.  
We also consider
variations related to classic 
multi-armed bandit problems from reinforcement learning. Our results
use a reduction-based framework where we separate the issues of the
cost of acquiring information from the online decision process of which prizes 
to keep. Our work shows that in many scenarios, Pandora can achieve a 
good approximation to the best possible performance.
\end{abstract}

\section{Introduction}

Information learning costs play a large role in a variety of
markets and optimization tasks.  For example, in the academic
job market, obtaining information about a potential match is a costly 
investment for both sides of the market.  Conserving on information costs is an 
important component of efficiency in such settings.

A classic model for information learning costs is the Pandora's
box problem, attributed to Weitzman \cite{weitzman79}, which has 
the following form.  Pandora has $n$ boxes,
where the $i$th box contains a prize of value $v_i$ that has a known
cumulative distribution function $F_i$.  It costs $c_i$ to open the
$i$th box and reveal the actual value $v_i$.  Pandora may open as
many boxes as she likes, in any order.  The payoff is the
maximum-valued prize, minus the cost of the opened boxes.  That is,
if $S$ is the subset of opened boxes, then the payoff Pandora seeks
to maximize is
$$\max_{i \in S} v_i - \sum_{i \in S} c_i.$$
The Pandora's box problem incorporates two key decision
aspects: the ordering for opening boxes, and when to stop.
It has been proposed for applications
such as buying or selling a house and searching for a job.

The original Pandora's box problem has a simple and elegant solution.
The {\em reservation price} $v_i^*$ associated with an unopened box $i$ is
the value for which Pandora would be indifferent taking a prize with
that value and opening box $i$.  That is,
\begin{align*}
v_i^* & = \inf\{y : y \geq -c_i + \ex{\max{v_i,y}}\} \\
& = \inf\{y : c_i \geq \ex{\max{v_i-y,0}}\}.
\end{align*}
This result says that if Pandora is allowed to choose the ordering,
Pandora should keep opening boxes in the order of decreasing
reservation price, but should stop searching when the largest prize
value obtained exceeds the reservation price of all unopened boxes.
An alternative proof to Weitzman's proof \cite{weitzman79} of this
was recently provided by Kleinberg, Waggoner, and
Weyl~\cite{KWW-16}, who also present additional applications,
including to auctions. Very recently Singla~\cite{Singla18}
generalizes the approach of Kleinberg et al.~\cite{KWW-16} for more
applications in offline combinatorial problems such as matching, set
cover, facility location, and prize-collecting Steiner tree.

In other similar problems, the ordering is chosen adversarially and
adaptively.  For example, in the {\em prophet inequality} setting
first introduced in 1977 by Garling, Krengel, and
Sucheston~\cite{krengel1978semiamarts,krengel1977semiamarts}, the boxes have
no cost, and the prize distributions are known, but the
decision-maker has to decide after each successive box whether to
stop the process and keep the corresponding prize; if not, the prize
cannot be claimed later.  It is known that there exists a
threshold-based algorithm that in expectation obtains a prize value
within a factor of two of the expected maximum prize (and the factor of two
is tight) \cite{krengel1978semiamarts,krengel1977semiamarts}.
There have subsequently been many generalizations of the prophet inequality setting,
especially to applications in online auctions (see
e.g.~\cite{hajiaghayi2007automated,alaei2012online,hajiaghayi2004adaptive,alaei2014bayesian,yan2011mechanism,KW-STOC12,babaioff2007matroids,Lachish-FOCS14,FSZ-SODA15,GM-SODA08,KorulaPal-ICALP09,MY-STOC11,KMT-STOC11,kesselheim2013optimal,GS-IPCO17,esfandiari2015prophet,duetting2017prophet}).

Another related and well-studied theme includes {\em multi-armed
bandit} problems and more generally reinforcement learning (see,
e.g., \cite{GJ74}).  In this setting, each ``box''
corresponds to a strategy, or arm, that has a payoff in each round.
An online algorithm chooses one arm from a set of arms in each round
over $n$ rounds.  Viewed in the language of selection problems, this 
translates to a feasibility constraint on the set of boxes that can be opened. 
Multi-armed bandit problems have applications
including online auctions, adaptive routing, and the theory of
learning in games.

In this paper, we consider a class of problems that combine
the cost considerations of Pandora's box with the online nature of
prophet inequality problems.  Again boxes are presented online, here
with random values and costs drawn jointly from some distribution.
Pandora chooses online whether to open each box, and then whether to
keep it or pass on it.  We aim for approximation algorithms against
adversaries that can choose the largest prize over any opened box,
and use optimal offline policies in deciding which boxes to open, without
knowledge of the value inside.  We consider variations where
Pandora can collect multiple prizes subject to sets of constraints.
For example, Pandora may be able to keep at most $k$ prizes, the
selected prizes must form an independent set in a matroid, or the prizes
might have associated weights that form a knapsack constraint. 
We also introduce
variations related to classic multi-armed bandit problems and
reinforcement learning, where there are feasibility constraints on the set of
boxes that can be opened. Our work shows that in many scenarios even
without the power of ordering choices, Pandora can achieve a good
approximation to the best possible performance.

Our main result is a reduction from this general class of problems, which we
refer to as online Pandora's box problems, to the problem of finding
threshold-based algorithms for the associated prophet inequality problems 
where all costs are zero.  Our reduction is constructive, and results in
a polynomial-time policy, given a polynomial-time algorithm for constructing
thresholds in the reduced problem.  
We first describe the reduction in Section~\ref{sec:reduction}.
Then in Section~\ref{sec:applications}, we show how to use known results from the prophet inequality literature
to directly infer solutions to online Pandora's box problems.  Finally, in Section~\ref{sec:multiarm},
we establish an algorithm for a multi-armed bandit variant of the online Pandora's box problem,
by proving a novel multi-armed prophet inequality.

\section{Pandora's Boxes Under General Constraints}
\label{sec:reduction}
In this section we consider a very general version of an online Pandora's box problem, with the goal of showing that, if there is a suitable corresponding prophet inequality algorithm, we can use it in a way that yields good approximation ratios for the Pandora's box problem.  We define the problem as follows. There is a sequence of boxes that arrive online, in an order chosen by an adversary (i.e., worst case order). Each box has a cost $c_i$, a value $v_i$, and a type $t_i$. The tuple $(c_i, v_i, t_i)$ is drawn from a joint distribution $F_i$. The distributions are known in advance. When a box is presented, we observe its type $t_i$. We can then choose whether to open the box. We note that, given the type $t_i$, $v_i$ and $c_i$ have conditional distributions depending on the type $t_i$.  There is a set of constraints dictating what combinations of boxes can be opened; these constraints can depend on the indexes of the boxes and their types. If we open the box, then $v_i$ and $c_i$ are revealed, and we pay $c_i$ for opening the box. We must then choose (irrevocably) whether to keep and collect $v_i$.  There is also a set of constraints dictating what combinations of values can be kept; these constraints can depend on the indexes of the boxes and their types. We indicate the set of opened boxes by $S\subseteq\{1,\dots,n\}$ and the set of of kept boxes by $R\subseteq S$.
The goal is to maximize the expected utility $\ex{\sum_{i\in R} v_i - \sum_{i \in S} c_i}$.

One might want to consider an adversary
that obtains $E[\max_{R, S} \sum_{i \in R} v_i - \sum_{i \in S} c_i]$, that is, a fully clairvoyant adversary. However, it is not possible to provide any competitive algorithm against such an adversary even for the simple classical Pandora's box problem.\footnote{Consider the following example with $n$ identical (and independent) boxes where we have no constraint on opening boxes and can accept exactly one box at the end. The value of each box is $0$ with probability $1/2$ and $2$ with probability $1/2$;  the cost of each box is $1$. Note that the cost of each box is equal to its expected value. Hence, the expected utility of any online algorithm is upper bounded by $0$. However, with probability $1-\frac{1}{2^n}$ at least one of the boxes has a value $2$. A fully clairvoyant adversary only opens the box with value $2$ and obtain a utility $2\times(1-\frac{1}{2^n})-1=1-\frac{1}{2^{n-1}}$. This is a positive utility when $n\geq 2$.}

We denote the expected utility of an algorithm $\alg$ by
$u_{\alg}$. We compare our algorithms against the (potentially
exponential time) optimal offline algorithm $\opt$ that maximizes the
expected utility. Specifically, $\opt$ can see all the types $t_i$ of
all the boxes, and $\opt$ can choose to open boxes in any order.
However, $\opt$ does not learn the resulting cost and value $c_i$
and $v_i$ for a box until it is opened. $\opt$ iteratively and adaptively opens boxes and at the
end chooses a subset of opened boxes to keep. Of course $\opt$ must
respect the constraints on opened boxes and kept boxes.  

We first prove some fundamental lemmas that capture important structure for this Pandora's box problem. We then use these lemmas to provide a strong connection between the online Pandora's box problem and prophet inequalities. Our results allow us to translate several prophet inequalities algorithms, such as prophet inequalities under capacity constraints, matroid constraints, or knapsack constraints, to algorithms for online Pandora's box algorithms under the same constraints.

\subsection{Fundamental Lemmas}

Our lemmas allow us flexibility in considering the distribution of costs, and show how we can 
preserve approximation ratios.  

\begin{definition}
Let $F_1,\dots,F_n$, and $F'_1,\dots,F'_n$ be two sequences of boxes.  Denote the outcomes of $F_i$ and $F'_i$ by $(v_i,c_i,t_i)$ and $(v'_i,c'_i,t'_i)$ respectively.  We say the two sequences are \emph{cost-equivalent} if (a) they can be coupled so that for all $i$ we have $v_i=v'_i$ and $t_i=t'_i$, and (b) $\ex{c_i \ |\ t_i}=\ex{c'_i\ |\ t_i}$ for all $t_i$.
\end{definition}

\begin{lemma}\label{lm:base}
	Let $F_1,\dots,F_n$, and $F'_1,\dots,F'_n$ be two cost-equivalent sequences of boxes.  %Denote the outcomes of $F_i$ and $F'_i$ by $(v_i,c_i,t_i)$ and $(v'_i,c'_i,t'_i)$ respectively. %, and 
	%suppose that they can be coupled so that for all $i$ we have $v_i=v'_i$ and $t_i=t'_i$. Suppose also that $\ex{c_i \ |\ t_i}=\ex{c'_i\ |\ t_i}$ for all $t_i$.
	Let $\alg'$ be an online (resp., offline) algorithm that achieves an expected utility $u_{\alg'}$ on boxes $F'_1,\dots,F'_n$. There exists an online (resp., offline) algorithm $\alg$ that achieves the same expected utility on boxes $F_1,\dots,F_n$.
\end{lemma}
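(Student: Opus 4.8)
The plan is to build $\alg$ by having it internally simulate $\alg'$ and intercept $\alg'$'s interaction with the boxes. Fix the coupling guaranteed by part (a) of cost-equivalence, so that $t_i=t_i'$ and $v_i=v_i'$ for every $i$. When box $i$ is presented, $\alg$ observes its type $t_i$ and forwards it to $\alg'$. Whenever $\alg'$ decides to open box $i$, $\alg$ also opens box $i$, pays the real cost $c_i$, observes $(v_i,c_i)$, discards $c_i$, draws a ``phantom cost'' $\tilde c_i$ from the conditional law of $c_i'$ under $F_i'$ given $\{t_i'=t_i,\ v_i'=v_i\}$, and feeds $(v_i,\tilde c_i)$ to $\alg'$. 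When $\alg'$ keeps box $i$, so does $\alg$; when $\alg'$ halts, so does $\alg$. Feasibility is inherited for free: the constraints on which boxes may be opened and which kept depend only on indices and types, and these agree across the two instances, so every trajectory feasible for $\alg'$ is feasible for $\alg$. The same construction applies verbatim in the offline case, where both algorithms see all types up front and may open boxes in any adaptive order.

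The first claim is that the stream of triples $(t_i,v_i,\tilde c_i)$ seen by $\alg'$ inside the simulation has exactly the same joint law as the stream $(t_i',v_i',c_i')$ it would see when run directly on $F_1',\dots,F_n'$. Indeed $(t_i,v_i)$ has the same law as $(t_i',v_i')$ by the coupling, and conditioned on $(t_i,v_i)$ the phantom cost $\tilde c_i$ is by construction distributed as $c_i'$ conditioned on $(t_i',v_i')=(t_i,v_i)$; independence of the boxes then lifts this to the whole stream. (This is precisely why we must resample $\tilde c_i$ conditioned on the revealed value rather than merely on the type: sampling from the type-conditional cost law alone would destroy any value--cost correlation that $\alg'$ exploits.) Hence $\alg'$ induces the same joint distribution over the opened set $S$, the kept set $R$, and the values $(v_i)$ as it does on $F'$; in particular $\ex{\sum_{i\in R}v_i}=\ex{\sum_{i\in R'}v_i'}$, so the value part of the utility already matches.

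It remains to match the cost part, $\ex{\sum_{i\in S}c_i}$ against $\ex{\sum_{i\in S'}c_i'}$, and this is the only delicate point, since $\alg$ pays the true costs $c_i$, which are not coupled to $c_i'$. The observation that rescues us is that the event $\{i\in S\}$ is decided strictly before box $i$ is opened, hence is a function only of the types, of $\alg'$'s internal coins, and of the pairs $(v_j,\tilde c_j)$ for boxes $j$ opened before $i$; by independence of the boxes (and of the coins), all of this is independent of $c_i$ given $t_i$. Therefore $\ex{c_i\,\mathbf{1}[i\in S]}=\ex{\ex{c_i\mid t_i}\,\prob{i\in S\mid t_i}}$, and the identical reasoning on $F'$ gives $\ex{c_i'\,\mathbf{1}[i\in S']}=\ex{\ex{c_i'\mid t_i'}\,\prob{i\in S'\mid t_i'}}$. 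Now $\ex{c_i\mid t_i}=\ex{c_i'\mid t_i}$ by part (b) of cost-equivalence, and $\prob{i\in S\mid t_i=\tau}=\prob{i\in S'\mid t_i'=\tau}$ for every type $\tau$ because, by the previous paragraph, $(\text{types},S)$ has the same law as $(\text{types},S')$, while $t_i$ and $t_i'$ are identically distributed. Summing over $i$ yields $\ex{\sum_{i\in S}c_i}=\ex{\sum_{i\in S'}c_i'}$, and combining with the value term gives $u_{\alg}=u_{\alg'}$.

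I expect the main source of error to be the bookkeeping of which quantities are coupled pathwise and which are merely equal in distribution: $v_i$ is coupled to $v_i'$ (which lets the value term be matched exactly, trajectory by trajectory), whereas $c_i$ is tied to $c_i'$ only through equality of type-conditional means, which is exactly why the cost term can only be matched in expectation and must go through the conditional-independence argument above. Care is also needed to confirm that the opening decision for box $i$ genuinely does not see $v_i$, $c_i$, or $\tilde c_i$, so that the indicator $\mathbf{1}[i\in S]$ is conditionally independent of $c_i$ given $t_i$.
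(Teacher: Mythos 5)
Your proposal is correct and follows essentially the same route as the paper: simulate $\alg'$ on the real boxes while resampling a phantom cost from the conditional law of $c'_i$ given the type and revealed value, then exploit that the opening indicator is independent of the cost conditional on the type, so the expected cost paid depends only on the type-conditional means, which agree by cost-equivalence. Your explicit verification that the opened-set distribution matches across the two instances is a slightly more detailed account of the same step the paper compresses into the identity $\ex{Y_i c_i \mid t_i} = Y_i \ex{c_i \mid t_i}$.
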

\begin{proof}
We will first suppose that $\alg'$ is online, so that the order of arrival is predetermined and the types are revealed online.
	We will define algorithm $\alg$ using the run of algorithm $\alg'$ on a simulated set of boxes $F'_1,\dots,F'_n$. When $\alg'$ attempts to open a box $F'_i$ we do the following. We open $F_i$, and let $(v_i,c_i,t_i)$ be the outcome. Then we draw a triple $(v'_i,c'_i,t'_i)$ from $F'_i$, conditioning on $v'_i=v_i$ and $t'_i=t_i$, and report it to $\alg'$. $\alg$ then opens the box if and only if $\alg'$ does, and likewise keeps the box if and only if $\alg'$ does.
	
	Let $Y_i$ be a binary random variable that is $1$ if $\alg'$ opens box $i$ and $0$ otherwise. Also, let $X_i$ be a binary random variable that is $1$ if $\alg'$ keeps box $i$ and $0$ otherwise.  Note that for any particular $i$, at the time that the algorithm decides about $Y_i$, $c_i$ is unknown to the algorithm.  Moreover, $c_i$ may be correlated with $t_i$, but is independent of all observations of the algorithm from prior rounds.  Therefore, after conditioning on $t_i$, $Y_i$ is independent of $c_i$. We have
	\begin{align*}
		&u_{\alg'}=
		 \ex{\sum_{i=1}^n X_i v'_i - \sum_{i=1}^n Y_i c'_i} \hspace{0.9cm}\text{Definition of utility}\\
		&=\ex{\sum_{i=1}^n X_i v'_i - \sum_{i =1}^n \ex{Y_i c'_i}} \hspace{0.9cm}\text{Linearity of Exp.}\\
		&=\ex{\sum_{i=1}^n X_i v'_i - \sum_{i =1}^n \ex[t'_i]{\ex{Y_i c'_i\ |\ t'_i}}}\hspace{0.2cm}\text{Draw $t'_i$ first}\\
		&=\ex{\sum_{i=1}^n X_i v'_i - \sum_{i =1}^n \ex[t'_i]{Y_i{\ex{ c'_i\ |\ t'_i}}}} \hspace{0.1cm}\text{ $Y_i$ indep.\ of $c'_i|t'_i$}\\
		&=\ex{\sum_{i=1}^n X_i v_i - \sum_{i =1}^n \ex[t_i]{Y_i\ex{c_i\ |\ t_i}}} \hspace{0.2cm}\text{ cost-equivalence}\\ %v_i = v'_i , \ex{c_i|t_i}=\ex{c'_i|t_i}\\
		&=\ex{\sum_{i=1}^n X_i v_i - \sum_{i =1}^n \ex[t_i]{\ex{Y_i c_i\ |\ t_i}}} \hspace{0.1cm}\text{ $Y_i$ indep.\ of $c_i|t_i$}\\
\end{align*}
\begin{align*}
		&=\ex{\sum_{i=1}^n X_i v_i - \sum_{i =1}^n Y_i c_i}  \hspace{1.25cm}\text{Linearity of Exp.}\\
		&=u_{\alg}. \hspace{1.3cm}\text{$\alg$ opens and keeps the same sets as $\alg'$}
	\end{align*}
	%where $u_{\alg'}$ is the expected utility of $\alg'$.
The case where $\alg'$ is an offline algorithm is similar.  The only difference is that the full profile of types is known to the algorithm in advance, and hence $Y_i$ and $X_i$ can depend on this profile.  We therefore fix the type profile, interpret variables $X_i$ and $Y_i$ as being conditioned on this realization of the types, interpret all expectations with respect to this conditioning, and the argument proceeds as before (noting that the distribution of $c'_i$ can depend on $t'_i$, but is independent of other types).  Note that this actually simplifies the chain of inequalities above, as the conditioning on $t'_i$ on the third line is trivial and unnecessary when $t'_i$ is fixed.
\end{proof}
 
\begin{lemma}\label{lm:approx}
%	Let $F_1,\dots,F_n$, and $F'_1,\dots,F'_n$ be two sequences of boxes. Denote the outcomes of $F_i$ and $F'_i$ by $(v_i,c_i,t_i)$ and $(v'_i,c'_i,t'_i)$ respectively. Assume for all $i$ we have $v_i=v'_i$, $t_i=t'_i$ and $\ex{c_i}=\ex{c'_i}$.
	Let $F_1,\dots,F_n$, and $F'_1,\dots,F'_n$ be two cost-equivalent sequences of boxes.	
	Let $\alg'$ be an online (resp., offline) $\alpha$-approximation algorithm on boxes $F'_1,\dots,F'_n$. There exists an online (resp., offline) $\alpha$-approximation algorithm $\alg$ on boxes $F_1,\dots,F_n$.
\end{lemma}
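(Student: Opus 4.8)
The plan is to combine Lemma~\ref{lm:base} with the observation that cost-equivalence is a symmetric relation, so that both the algorithm's performance and the offline benchmark transfer between the two sequences of boxes.

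First I would unpack what an $\alpha$-approximation means: $\alg'$ satisfies $\alpha \cdot u_{\alg'} \ge \opt$, where here $\opt$ denotes the optimal expected utility achievable by an offline policy on $F'_1,\dots,F'_n$. Applying Lemma~\ref{lm:base} directly to $\alg'$ (in the online case if $\alg'$ is online, in the offline case if it is offline) produces an algorithm $\alg$ of the same type on $F_1,\dots,F_n$ with $u_{\alg} = u_{\alg'}$. So it remains only to relate the two offline optima.

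Next I would note that cost-equivalence is symmetric: condition~(a) of the definition is a symmetric coupling condition, and under that coupling $t_i = t'_i$, so condition~(b) reads $\ex{c_i \mid t_i} = \ex{c'_i \mid t'_i}$, which is symmetric in the two sequences. Hence I can invoke the \emph{offline} case of Lemma~\ref{lm:base} with the roles of $F_1,\dots,F_n$ and $F'_1,\dots,F'_n$ interchanged: starting from an optimal offline policy on $F_1,\dots,F_n$, which achieves expected utility $\opt_F$, Lemma~\ref{lm:base} yields an offline policy on $F'_1,\dots,F'_n$ with the same expected utility, so $\opt_{F'} \ge \opt_F$. Running the same argument in the reverse direction gives $\opt_F \ge \opt_{F'}$, hence $\opt_F = \opt_{F'}$.

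Finally I would assemble the pieces: $\alpha \cdot u_{\alg} = \alpha \cdot u_{\alg'} \ge \opt_{F'} = \opt_F$, so $\alg$ is an $\alpha$-approximation on $F_1,\dots,F_n$, and it is online exactly when $\alg'$ is. There is no genuine obstacle here — everything follows as a corollary of Lemma~\ref{lm:base}; the only point needing a moment's care is that the benchmark $\opt$ is itself an offline policy, so transferring it across the two sequences relies on the ``offline'' half of Lemma~\ref{lm:base} together with the symmetry of cost-equivalence.
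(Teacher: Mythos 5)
Your proposal is correct and follows essentially the same route as the paper: the paper likewise transfers $\alg'$ via Lemma~\ref{lm:base} to get $u_{\alg}=u_{\alg'}$, and then applies Lemma~\ref{lm:base} in the reverse direction to the offline optimum on $F_1,\dots,F_n$ to obtain an offline policy on $F'_1,\dots,F'_n$ with the same utility, which is dominated by $\opt'$. The only differences are cosmetic: you make the symmetry of cost-equivalence explicit (the paper uses it implicitly) and prove the full equality of the two optima where one inequality suffices; also note the paper's convention is $u_{\alg}\ge\alpha\cdot u_{\opt}$ rather than $\alpha\cdot u_{\alg}\ge u_{\opt}$.
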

\begin{proof}
	Let $\opt$ and $\opt'$ be the optimum (offline) algorithms for boxes $F_1,\dots,F_n$ and $F'_1,\dots,F'_n$ respectively. Let $\alg$ be the algorithm of Lemma \ref{lm:base} applied to $\alg'$.
	Moreover, note that Applying $\opt$ to Lemma \ref{lm:base} implies that there is some offline algorithm $\alg''$ on boxes $F'_1,\dots,F'_n$ such that $u_{\opt} = u_{\alg''}$.
	 We bound the approximation factor of $\alg$ as follows.
	\begin{align*}
		\frac{u_{\alg}}{u_{\opt}} &= \frac{u_{\alg'}}{u_{\opt}} &\text{By Lemma \ref{lm:base}}\\
		&= \frac{u_{\alg'}}{u_{\alg''}} &	u_{\opt} = u_{\alg''}	\\
		&\geq \frac{u_{\alg'}}{u_{\opt'}} &\text{definition of $\opt'$}\\
		&\geq \alpha. &\text{$\alg'$ is an $\alpha$-approximation algorithm}
	\end{align*} 
\end{proof}

Next we define the commitment Pandora's box problem on boxes $F^*_1 = (v^*_1,c^*_1,t^*_1),\dots,F^*_n = (v^*_n,c^*_n,t^*_n)$. Commitment Pandora's box is similar to the Pandora's box problem with the following two restrictions, that we refer to as freeness and commitment, respectively. 
\begin{itemize}
	\item {\bf Freeness:} Opening any box is free, i.e., for all $i$ we have $c^*_i=0$. 
	\item {\bf Commitment:} If a box $F^*_i$ is opened and the value $v^*_i$ is the maximum possible value of $F^*_i$, $v^*_i$ is kept. 
\end{itemize}
Note that the commitment constraint is without loss of generality for an online algorithm, but is a non-trivial restriction for an offline algorithm.

For each $i$, and for each type $t_i$, we define a threshold $\sigma_i$ so that we have $\ex{v_i-\min(v_i,\sigma_i)\ |\ t_i}=\ex{c_i\ |\ t_i}$. If $c_i=0$, we set $\sigma_i$ to the supremum of the $v_i$.  (It is possible to have $\sigma_i$ be infinity, with the natural interpretation.)  We define $F^*_i = (\min(v_i,\sigma_i),0,t_i)$ in the following lemma.

\begin{theorem}\label{thm:commit}
	Let $\alg^*$ be an $\alpha$-approximation algorithm for the commitment Pandora's box problem on boxes $F^*_1,\dots,F^*_n$. There exists an $\alpha$-approximation algorithm $\alg$ for the Pandora's box problem on $F_1,\dots,F_n$.
\end{theorem}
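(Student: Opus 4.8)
The plan is to prove the statement by two reductions that, using the defining identity for $\sigma_i$, move between the costly instance $F_1,\dots,F_n$ and the free, committed instance $F^*_1,\dots,F^*_n$. Write $w_i:=\min(v_i,\sigma_i)$, so that $F^*_i=(w_i,0,t_i)$, $v_i-w_i=\max(v_i-\sigma_i,0)\ge 0$, and by the choice of $\sigma_i$, $\ex{v_i-w_i\mid t_i}=\ex{c_i\mid t_i}$. For any algorithm in play let $S$ denote its (random) set of opened boxes and $R\subseteq S$ its set of kept boxes. Two facts recur. (i) The decision to open box $i$ precedes the revelation of $(v_i,c_i)$ and boxes are independent, so $\mathbf{1}[i\in S]$ is independent of $(v_i,c_i)$ given $t_i$, whence $\ex{\mathbf{1}[i\in S]\,g(v_i,c_i)}=\ex[t_i]{\prob{i\in S\mid t_i}\,\ex{g(v_i,c_i)\mid t_i}}$ for any $g$. (ii) The commitment constraint forces an opened $F^*_i$ to be kept whenever $v_i\ge\sigma_i$ --- the event on which $w_i$ equals its conditional supremum; the leftover event ``$v_i$ exactly at its supremum'' has probability zero and is ignored. (If convenient, one may first use Lemma~\ref{lm:approx} to pass to a cost-equivalent instance whose costs are deterministic given the type, trivializing (i); this is optional.)

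\emph{Step 1 (from $\alg^*$ to $\alg$).} Define $\alg$ on $F_1,\dots,F_n$ by simulating $\alg^*$: when $\alg^*$ processes box $i$, $\alg$ opens $F_i$, observes $(v_i,c_i,t_i)$, hands $(w_i,0,t_i)$ --- a genuine draw from $F^*_i$ --- to $\alg^*$, and opens, resp.\ keeps, box $i$ exactly when $\alg^*$ does. As in Lemma~\ref{lm:base} this preserves online/offline status, and $\alg$ is feasible because it opens and keeps exactly the sets $S,R$ that $\alg^*$ does, under the same constraints (which depend only on indices and types, shared by the two instances). Then $u_{\alg^*}=\ex{\sum_{i\in R}w_i}$, $u_{\alg}=\ex{\sum_{i\in R}v_i-\sum_{i\in S}c_i}$, and so $u_{\alg}-u_{\alg^*}=\sum_i\bigl(\ex{\mathbf{1}[i\in R](v_i-w_i)}-\ex{\mathbf{1}[i\in S]c_i}\bigr)$. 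Since $v_i-w_i=0$ unless $v_i\ge\sigma_i$, and on $\{v_i\ge\sigma_i\}$ fact (ii) gives $\mathbf{1}[i\in R]=\mathbf{1}[i\in S]$, we may replace $\mathbf{1}[i\in R]$ by $\mathbf{1}[i\in S]$ in the first expectation; then fact (i) and $\ex{v_i-w_i\mid t_i}=\ex{c_i\mid t_i}$ give $\ex{\mathbf{1}[i\in S](v_i-w_i)}=\ex{\mathbf{1}[i\in S]c_i}$. Hence every summand is $0$ and $u_{\alg}=u_{\alg^*}$.

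\emph{Step 2 (bounding $\opt$ on $F$ by the benchmark of the commitment problem on $F^*$).} Let $S,R$ now be the opened and kept sets of the optimal offline $\opt$ for $F$. Writing $v_i=w_i+(v_i-w_i)$, then using $R\subseteq S$ and $v_i-w_i\ge 0$, and then fact (i) and the $\sigma_i$-identity exactly as in Step 1, one obtains $\ex{\sum_{i\in R}(v_i-w_i)}\le\ex{\sum_{i\in S}(v_i-w_i)}=\ex{\sum_{i\in S}c_i}$, and therefore $u_{\opt}=\ex{\sum_{i\in R}w_i}+\ex{\sum_{i\in R}(v_i-w_i)}-\ex{\sum_{i\in S}c_i}\le\ex{\sum_{i\in R}w_i}$. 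Since $R$ is always feasible, $\ex{\sum_{i\in R}w_i}$ is at most the quantity against which $\alg^*$ is assumed $\alpha$-competitive on $F^*$: if that quantity is the clairvoyant optimum $\ex{\max_{R'}\sum_{i\in R'}w_i}$ over feasible $R'$ (the benchmark of the prophet inequalities one wants to invoke), this is immediate; if it is instead the adaptive, commitment-respecting offline optimum $\opt^*$ on $F^*$, one exhibits a commitment algorithm on $F^*$ that achieves $\ex{\sum_{i\in R}w_i}$ by simulating $\opt$ --- reconstructing each $(v_i,c_i)$ from its conditional law given $(t_i,w_i)$, opening what $\opt$ opens, and keeping $\opt$'s set $R$ --- and one must check that the commitment-forced keeps never overflow the feasibility constraint. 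That check, which amounts to normalizing $\opt$ into reservation-price/index form (as in Weitzman~\cite{weitzman79}, Kleinberg--Waggoner--Weyl~\cite{KWW-16}, and Singla~\cite{Singla18}) --- open boxes in decreasing order of $\sigma_i$, never open a box one can no longer feasibly keep, and consequently keep every opened box whose value exceeds its own reservation price, so that the forced set lies inside $R$ --- is the step I expect to be the main obstacle.

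\emph{Conclusion.} Chaining the two steps, $u_{\alg}=u_{\alg^*}$ is at least $\alpha$ times the benchmark of the commitment problem on $F^*$, which by Step 2 is at least $u_{\opt}$; hence $\alg$ is an $\alpha$-approximation for Pandora's box on $F_1,\dots,F_n$. The degenerate cases $c_i=0$ (so $\sigma_i$ is the supremum of $v_i$ and $F^*_i=(v_i,0,t_i)$) and $\sigma_i=\infty$ need no change, the commitment constraint then binding only on a probability-zero event.
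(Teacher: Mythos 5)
Your proposal takes essentially the same route as the paper's proof, with one organizational difference: the paper first passes to the cost-equivalent instance $F'_i=(v_i,\,v_i-\min(v_i,\sigma_i),\,t_i)$ and invokes Lemma~\ref{lm:approx}, whereas you inline that reduction; your facts (i) and (ii) and the identity $\mathbf{1}[i\in R](v_i-w_i)=\mathbf{1}[i\in S](v_i-w_i)$ are exactly the paper's independence observation and its Eq.~\eqref{eq:commitYi}, so your Step~1 and the conclusion $u_{\alg}=u_{\alg^*}$ reproduce the paper's computation. The point worth flagging is Step~2. The paper dispatches the benchmark direction with the single sentence ``Similarly, we can show $u_{\opt'}=u_{\opt^*}$,'' but the direction it actually needs ($u_{\opt'}\le u_{\opt^*}$, the counterpart of your $\ex{\sum_{i\in R}w_i}\le u_{\opt^*}$) requires converting an unrestricted offline optimum into a commitment-respecting algorithm on $F^*_1,\dots,F^*_n$, and whether the commitment-forced keeps can violate the feasibility constraint on kept boxes is precisely the issue. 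So the obstacle you identify is real, you have correctly isolated the only non-routine part of the argument, and the paper supplies it in no more detail than you do. To actually close it one must either normalize the offline optimum so that it never opens a box it is not prepared to keep on the event $v_i\ge\sigma_i$ (the reservation-price argument you sketch), or observe that the benchmarks used downstream in Theorem~\ref{thm:reduction} are the clairvoyant quantities of the form $\ex{\max_{R'}\sum_{i\in R'}w_i}$, against which the bound $u_{\opt}\le\ex{\sum_{i\in R}w_i}$ suffices immediately, as you note.
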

\begin{proof}
	First we define a sequence of boxes $F'_1,\dots,F'_n$, where for all $i$ we have $F'_i=(v_i,v_i-\min(v_i,\sigma_i),t_i)$.  That is, we set $v'_i = v_i$, $t'_i = t_i$, and $c'_i = v_i - \min(v_i, \sigma_i)$.  Note that by definition of $\sigma_i$ we have $\ex{v_i-\min(v_i,\sigma_i)\ |\ t_i}=\ex{c_i\ |\ t_i}$ for all $t_i$. Thus, by Lemma \ref{lm:approx} an (online) $\alpha$-approximation algorithm for the Pandora's box problem on $F'_1,\dots,F'_n$ implies an $\alpha$-approximation algorithm for the Pandora's box problem on $F_1,\dots,F_n$ as desired. Next, we construct $\alg'$ required by Lemma \ref{lm:approx} using $\alg^*$.  To construct $\alg'$, whenever $\alg^*$ attempts to open a box $F^*_i$, we open $F'_i$ and report $(v'_i-c'_i,0,t_i) = (\min(v_i,\sigma_i),0,t_i) = F^*_i$ to $\alg^*$. $\alg'$ keeps the same set of boxes as $\alg^*$. 
	
	Let $Y_i$ be a binary random variable that is $1$ if $\alg^*$ opens box $i$ and $0$ otherwise. Also, let $X_i$ be a binary random variable that is $1$ if $\alg^*$ keeps box $i$ and $0$ otherwise.
	Note that $v^*_i =\min(v_i,\sigma_i)\leq \sigma_i$, and $v^*_i$ achieves it maximum value whenever $v_i\geq \sigma_i$. In this case by the commitment constraint we have $X_i=Y_i$. Therefore we either have $v_i-\min(v_i,\sigma_i)=0$ or $X_i=Y_i$, which gives us
	\begin{align}\label{eq:commitYi}
		Y_i \big(v_i-\min(v_i,\sigma_i)\big) = X_i \big(v_i-\min(v_i,\sigma_i)\big)
	\end{align}
	 Then for any fixed profile of types, and taking expectations conditional on those type realizations, we have
	\begin{align*}
		u_{\alg'}&=\ex{\sum_{i=1}^n X_i v'_i - \sum_{i=1}^n Y_i c'_i} &\text{}\\
		&=\ex{\sum_{i=1}^n X_i v_i - \sum_{i=1}^n Y_i \big(v_i-\min(v_i,\sigma_i)\big)} &\text{By def.}\\
		&=\ex{\sum_{i=1}^n X_i v_i - \sum_{i=1}^n X_i \big(v_i-\min(v_i,\sigma_i)\big)}
		&\text{Eq. \eqref{eq:commitYi}}\\
%		\end{align*}
%		\begin{align*}
		&=\ex{\sum_{i=1}^n X_i \Big( v_i - \big(v_i-\min(v_i,\sigma_i)\big)\Big)}&\\  
		&=\ex{\sum_{i=1}^n X_i \min(v_i,\sigma_i)}&\\  
		&=\ex{\sum_{i=1}^n X_i v^*_i}& \\
		&= u_{\alg^*},&  
	\end{align*}
	where the first equality is since $\alg'$ opens and keeps the same sets as $\alg^*$.
	Similarly, we can show $u_{\opt'}=u_{\opt^*}$, where $\opt'$ is the optimum algorithm for the Pandora's box problem on $F'_1,\dots,F'_n$ and $\opt^*$ is the optimum algorithm for the commitment Pandora's box problem on $F^*_1,\dots,F^*_n$. Therefore $\alg'$ is an $\alpha$-approximation algorithm for the Pandora's box problem on $F'_1,\dots,F'_n$ as promised.
\end{proof}

\subsection{A Reduction for Online Pandora's Box Problems}

In this section we use Theorem \ref{thm:commit} to provide a strong connection between the online Pandora's box problem under general constraints and prophet inequalities.  This leads to our main result: we prove that a threshold-based algorithm for a prophet inequality problem, under any given feasibility constraints on boxes that can be opened and/or prizes that can be kept, immediately translates into an algorithm for the online Pandora's box problem.  This reduction preserves the approximation factor of the threshold-based algorithm.  This implies several approximation algorithms for the online Pandora's box problem under different constraints, which we discuss in Section~\ref{sec:applications}.

Recall that, in the online Pandora's Box problem, the algorithm is permitted to keep a set $R\subseteq S$ of boxes and collect the reward, where $R$ and $S$ are restricted to be from arbitrary predefined collections of feasible collections $\mathbb{R}$ and $\mathbb{S}$. In the associated prophet inequality problem, the costs of all boxes are known to be $0$.  In Theorem~\ref{thm:reduction} below, we use the notion of {\em threshold-based algorithms} for the prophet inequality problem defined as follows. We say an algorithm is threshold-based if for every $i\in \{1,\dots,n\}$ we have a threshold $\tau_i(t_i)$ (where the threshold can depend on the type as well as the index) and the algorithm keeps a box if and only if it not less than $\tau_i(t_i)$. The threshold $\tau_i(t_i)$ may be adaptive, that is it may depend on any observation prior to observing the $i$th box.

\begin{theorem}\label{thm:reduction}
	Let $\alg_{\tau}$ be a threshold-based $\alpha$-approximation algorithm for the prophet inequalities problem, under a collection of constraints. There exists an $\alpha$-approximation algorithm for the online Pandora's box problem under the same constraints.
\end{theorem}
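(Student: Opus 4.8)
The plan is to chain $\alg_\tau$ with Theorem~\ref{thm:commit}. By that theorem it suffices to build an $\alpha$-approximation algorithm $\alg^*$ for the \emph{commitment} Pandora's box problem on the boxes $F^*_1,\dots,F^*_n$, where $F^*_i=(\min(v_i,\sigma_i),0,t_i)$ with $\sigma_i$ the threshold defined just before Theorem~\ref{thm:commit}. The crux is that, apart from the commitment restriction, the commitment Pandora's box problem on $F^*$ \emph{is} a zero-cost prophet inequality problem under exactly the same feasibility collections $\mathbb{S}$ and $\mathbb{R}$; so $\alg_\tau$ is essentially already what we want, and the only real work is to make it honour the commitment restriction without giving up value or feasibility.

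First I would note that we may assume without loss of generality that every threshold $\tau_i(t_i)$ used by $\alg_\tau$ on $F^*$ does not exceed the supremum of the support of $v^*_i$ given $t_i$: if it did, $\alg_\tau$ would never keep box $i$ under type $t_i$, so we can equivalently have $\alg_\tau$ decline to open that box (opening is free, and declining to open a box never violates the constraints on the opened set). Now define $\alg^*$ by running $\alg_\tau$ on a coupled simulation in the style of Lemma~\ref{lm:base}: when $\alg_\tau$ opens a box $i$ (after observing its type $t_i$), $\alg^*$ opens the real box $F^*_i$ precisely when $\tau_i(t_i)$ lies in the support of $v^*_i\mid t_i$, reports the revealed value to $\alg_\tau$, and keeps box $i$ iff $\alg_\tau$ does; in the remaining case $\alg^*$ does not open box $i$ and feeds $\alg_\tau$ a fresh independent draw from $F^*_i\mid t_i$, which lies below $\tau_i(t_i)$ and so is not kept. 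By construction $\alg^*$ opens a subset of the boxes $\alg_\tau$ opens and keeps exactly the boxes $\alg_\tau$ keeps, so $\alg^*$ respects $\mathbb{S}$ and $\mathbb{R}$ and $u_{\alg^*}=u_{\alg_\tau}$. Finally, whenever $v^*_i$ attains its maximum possible value we have $v^*_i\geq\tau_i(t_i)$ by the cap above, so $\alg^*$ keeps it, i.e., $\alg^*$ obeys the commitment restriction. This is the single place where being \emph{threshold}-based is used: an arbitrary algorithm could reject a box even at its maximum value, which commitment forbids.

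It then remains only to track the ratio. Write $\opt_{\mathrm{PI}}$ for the optimal offline policy of the zero-cost prophet inequality problem on $F^*$ and $\opt^*$ for the optimum of the commitment Pandora's box problem on $F^*$ (as in Theorem~\ref{thm:commit}). Since the commitment problem is the prophet inequality problem with one extra constraint, $u_{\opt^*}\leq u_{\opt_{\mathrm{PI}}}$; and since $\alg_\tau$ is an $\alpha$-approximation for the prophet inequality problem in general, hence in particular on $F^*$, we get $u_{\alg^*}=u_{\alg_\tau}\geq\alpha\cdot u_{\opt_{\mathrm{PI}}}\geq\alpha\cdot u_{\opt^*}$. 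Thus $\alg^*$ is an $\alpha$-approximation for the commitment Pandora's box problem on $F^*$, and Theorem~\ref{thm:commit} converts it into an $\alpha$-approximation for the online Pandora's box problem on $F_1,\dots,F_n$ under the same constraints.

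I expect the middle step to be the main obstacle: reconciling the commitment restriction, which is built into the offline benchmark of Theorem~\ref{thm:commit}, with the freedom a prophet-inequality algorithm has to discard boxes. The resolution---capping each threshold at the support of its value and replacing ``always discard'' by ``never open''---is exactly what makes the kept set of $\alg^*$ coincide with that of $\alg_\tau$, which at once preserves value, feasibility of both $S$ and $R$, and commitment; everything else is the bookkeeping already done in Lemma~\ref{lm:base}, Lemma~\ref{lm:approx}, and Theorem~\ref{thm:commit}.
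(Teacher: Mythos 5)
Your proposal is correct and follows essentially the same route as the paper: reduce via Theorem~\ref{thm:commit} to the commitment problem on $F^*_1,\dots,F^*_n$, satisfy commitment by declining to open any box whose threshold exceeds the largest achievable capped value (the paper phrases this test as opening iff $\sigma_i \geq \tau_i(t_i)$), and then observe that the simulated $\alg_{\tau}$ and the constructed $\alg^*$ open and keep identical sets, so the value and the approximation ratio carry over. One phrasing slip: ``opens $F^*_i$ precisely when $\tau_i(t_i)$ lies in the support of $v^*_i\mid t_i$'' should read ``when $\tau_i(t_i)$ is at most the maximum possible value of $v^*_i$ given $t_i$,'' matching your own WLOG and the paper's condition; aside from that, your explicit chain $u_{\alg^*}=u_{\alg_{\tau}}\geq\alpha\, u_{\opt_{\mathrm{PI}}}\geq\alpha\, u_{\opt^*}$ is, if anything, more careful than the paper's one-line assertion.
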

\begin{proof}
	We define $F^*_i = (v^*_i=\min(v_i,\sigma_i),c^*_i=0,t_i)$.
	Next we give an $\alpha$-approximation algorithm $\alg^*$ for the commitment Pandora's box problem on boxes $F^*_1,\dots,F^*_n$. This together with Theorem \ref{thm:commit} will prove the theorem.
	
	Let $\tau_i(.)$ be the threshold function used by $\alg_{\tau}$ given observed values $v_1^*,\ldots,v_n^*$. We define $\alg^*$ as follows. Upon arrival of box $F^*_i$, first we check if $\sigma_i\geq \tau_i(t_i)$. Note that this implicitly implies that the box is acceptable according to the constraints. If $\sigma_i\geq \tau_i(t_i)$ we open the box, otherwise we skip it. This ensures the commitment constraint. If we opened the box and $v^*_i\geq \tau_i$ we keep it, otherwise we ignore it and continue. It is easy to observe that $\alg_{\tau}$ and $\alg^*$ provide the same outcome and have the same approximation factor.
\end{proof}

\section{Algorithms For Online Pandora's Box via Prophet Inequalities}
\label{sec:applications}
Here we use the tools from the previous section to provide algorithms for the online Pandora's box problem under different kinds of constraints. First, as a warm-up, in Theorem \ref{thm:singleItem} we show a $1/2$-approximation algorithm for a simple version of the problem where there are no constraints on the set of boxes that we can open, and we can only keep the value of one box. Note that this is the online version of the classical Pandora's box problem.
Indeed it is known that there is no $1/2+\epsilon$ approximation algorithm for this problem even if all of the costs are $0$ (where the problem is equivalent to a basic prophet inequalities problem) \cite{KW-STOC12}.  We will prove Theorem~\ref{thm:singleItem} directly, without appealing to Theorem~\ref{thm:reduction}, to provide insight into how the given thresholds translate into a policy for the Pandora's Box problem.

\begin{theorem}\label{thm:singleItem}
	There exists a $1/2$-approximation algorithm for the online Pandora's box problem with no constraints on opening boxes, but the value of exactly one box is kept.
\end{theorem}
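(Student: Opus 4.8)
The plan is to give an explicit policy obtained by running the classical single-item prophet inequality on the capped, cost-free random variables $v^*_i=\min(v_i,\sigma_i)$, where $\sigma_i$ is the threshold defined just before Theorem~\ref{thm:commit}, satisfying $\ex{v_i-\min(v_i,\sigma_i)\ |\ t_i}=\ex{c_i\ |\ t_i}$. Since $\sigma_i$ is a deterministic function of $t_i$, each $v^*_i$ is a deterministic function of $(v_i,t_i)$, so the $v^*_i$ are independent across $i$, and (from the known distributions) we may fix a threshold $\tau$ witnessing the classical $\tfrac12$-approximation prophet inequality for them, i.e. the policy ``keep the first value $v^*_i$ that is at least $\tau$'' has expected reward at least $\tfrac12\ex{\max_i v^*_i}$. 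The Pandora policy $\alg$ is the natural translation: when box $i$ arrives we observe $t_i$ and compute $\sigma_i$; if nothing has been kept yet and $\sigma_i\geq\tau$ we open box $i$, pay $c_i$, observe $v_i$, and keep box $i$ (stopping all further openings) if and only if $v_i\geq\tau$; otherwise we skip box $i$. The point is that the prophet test $v^*_i\geq\tau$ factors as $\sigma_i\geq\tau$ (checkable \emph{before} paying $c_i$) together with $v_i\geq\tau$ (checkable after opening), since $\min(v_i,\sigma_i)\geq\tau$ iff $v_i\geq\tau$ and $\sigma_i\geq\tau$.

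The analysis has three steps. First I would show $u_{\alg}=\ex{\sum_i X_i v^*_i}$, where $Y_i,X_i$ indicate that $\alg$ opens, resp.\ keeps, box $i$. The cost paid is $\sum_i Y_i c_i$, and because $Y_i$ is determined before $v_i,c_i$ are revealed, $Y_i$ is independent of $c_i$ given $t_i$; hence, exactly as in the proof of Lemma~\ref{lm:base}, $\ex{\sum_i Y_i c_i}=\ex{\sum_i Y_i\,\ex{c_i\ |\ t_i}}=\ex{\sum_i Y_i(v_i-v^*_i)}$. Next, whenever $v_i-v^*_i>0$ we have $v_i>\sigma_i$, and if moreover $Y_i=1$ then $\sigma_i\geq\tau$, so $v_i>\tau$ and box $i$ is kept; thus $Y_i(v_i-v^*_i)=X_i(v_i-v^*_i)$ pointwise — the analogue of~\eqref{eq:commitYi} — giving $u_{\alg}=\ex{\sum_i X_i v_i}-\ex{\sum_i X_i(v_i-v^*_i)}=\ex{\sum_i X_i v^*_i}$. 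Second, by the factorization noted above together with a trivial induction, the set of boxes kept by $\alg$ coincides with the set accepted by the threshold-$\tau$ policy on the independent sequence $v^*_1,\dots,v^*_n$, so $\ex{\sum_i X_i v^*_i}$ equals that policy's expected reward, which is at least $\tfrac12\ex{\max_i v^*_i}$. Third, $u_{\opt}\leq\ex{\max_i v^*_i}$: by the reasoning in the proof of Theorem~\ref{thm:commit} (equivalently, Lemmas~\ref{lm:base}–\ref{lm:approx} applied to $\opt$), $u_{\opt}$ equals the optimum of the commitment Pandora's box problem on $F^*_1,\dots,F^*_n$, which has zero costs and keeps a single box of value $v^*_i$, so even a clairvoyant policy earns at most $\ex{\max_i v^*_i}$. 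Chaining the three steps yields $u_{\alg}\geq\tfrac12\ex{\max_i v^*_i}\geq\tfrac12 u_{\opt}$.

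I expect the main obstacle to be the first step: correctly accounting for costs actually paid and matching $\alg$'s behaviour — including its stopping rule — to the prophet policy on the $v^*_i$. Concretely, one must verify that the ``commitment'' identity $Y_i(v_i-v^*_i)=X_i(v_i-v^*_i)$ holds for this particular policy (the excess $v_i-v^*_i$ is nonzero only when $v_i$ exceeds $\sigma_i\geq\tau$, at which point the box is kept), and that, conditioned on reaching box $i$ without having kept anything, the decision rule genuinely collapses to the single test $v^*_i\geq\tau$. Everything else is either the textbook single-item prophet inequality or a direct reuse of the cost-equivalence bookkeeping already established in Lemmas~\ref{lm:base}–\ref{lm:approx} and Theorem~\ref{thm:commit}; in fact this argument is just an unpacked, self-contained instance of Theorem~\ref{thm:reduction}, kept explicit to show how the prophet threshold becomes a Pandora policy.
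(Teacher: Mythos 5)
Your proposal is correct and follows essentially the same route as the paper: cap each value at $\sigma_i$ to get the cost-free variables $v^*_i=\min(v_i,\sigma_i)$, pick a single threshold $\tau$ from the classical single-item prophet inequality, and translate it into the policy ``open iff $\sigma_i\geq\tau$, keep iff $v_i\geq\tau$,'' which is exactly the paper's algorithm. The only difference is presentational — the paper invokes Theorem~\ref{thm:commit} as a black box, whereas you inline its cost-equivalence and commitment-identity bookkeeping — and your factorization $v^*_i\geq\tau \iff (\sigma_i\geq\tau \text{ and } v_i\geq\tau)$ and the identity $Y_i(v_i-v^*_i)=X_i(v_i-v^*_i)$ both check out.
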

\begin{proof}
	We define $F^*_i = (v^*_i=\min(v_i,\sigma_i),c^*_i=0,t_i)$.
	Next we give a simple $1/2$-approximation algorithm $\alg^*$ for the commitment Pandora's box problem on boxes $F^*_1,\dots,F^*_n$. This together with Theorem \ref{thm:commit} proves the theorem. 
	
	Set $\tau$ such that $\prob{\max_{i=1}^n v^*_i \geq \tau}=\frac{1}{2}$. Let $j$ be a random variable that indicates the first index such that $v^*\geq \tau$. Let $v^*=v^*_j$, if there exists such index $j$, and let $v^*=0$ otherwise. It is known that $\ex{v^*}=\frac 1 2 \ex{\max_{i=1}^n v^*_i}$ \cite{samuel1984comparison}.
	
	We define $\alg^*$ as follows. Upon arrival of box $F^*_i$, first we check if $\sigma_i\geq \tau$. If it is we open the box, otherwise we skip it. This ensures the commitment constraint;  that is, if we observe $\sigma_i$, we will accept it. Next, if we opened the box and $v^*_i\geq \tau_i$ we keep it and terminate. Otherwise we continue to the next box. It is easy to observe that $\alg^*$ keeps $v^*$, and hence is a $1/2$-approximation algorithm.
\end{proof}

We now explore other applications of our reduction.  Theorem \ref{thm:reduction}, together with previously-known approximation algorithms for prophet inequality problems with various types of constraints, implies the existence of approximation algorithms for variations of the online Pandora's box problem.  Specifically, we have the following variations:
\begin{itemize}
\item {\em Online $k$-Pandora's box problem}: we are given a cardinality $k$, and at most $k$ boxes can be kept.
\item {\em Online knapsack Pandora's box problem}: we have a capacity $C$, and the type $t_i$ of each box corresponds to a size. The total size of the boxes that can be kept is at most $C$.
\item {\em Online matroid Pandora's box problem}: we have a matroid constraint on the set of boxes, and the boxes that are kept must be an independent set of the matroid.
\end{itemize}

We note that all of the variations above have no constraints on opening boxes;  however, in what follows, we study a variation of the problem with constraints on opening boxes.

As we have mentioned, prophet inequality approximation algorithms for the settings of cardinality constraints~\cite{alaei2014bayesian}, knapsack constraints~\cite{duetting2017prophet}, and matriod constraints~\cite{KW-STOC12} exist. Making use of these results and Theorem \ref{thm:reduction} implies the following corollary. 
\begin{corollary}
	There is
	\begin{itemize}
	\item a $1-\frac{1}{\sqrt{k+3}}$-approximation algorithm for the online $k$-Pandora's box problem,
	\item a $1/5$-approximation algorithm for the online knapsack Pandora's box problem,
	\item and a $1/2$-approximation algorithm for the online matroid Pandora's box problem.
	\end{itemize}
\end{corollary}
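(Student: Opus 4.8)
The plan is to obtain all three bounds as direct applications of Theorem~\ref{thm:reduction}, which converts any threshold-based $\alpha$-approximate algorithm for a prophet inequality problem (under a given family of feasibility constraints on the kept set) into an $\alpha$-approximate algorithm for the online Pandora's box problem under the same constraints; internally the reduction zeroes out costs via the reservation-price transformation $F^*_i = (v^*_i = \min(v_i,\sigma_i),\, c^*_i = 0,\, t_i)$ and Theorems~\ref{thm:commit} and \ref{thm:reduction}. So the work reduces to exhibiting, for each of the three constraint families, a prophet inequality whose policy is \emph{threshold-based} in the sense defined just before Theorem~\ref{thm:reduction}: each box $i$, given its revealed type $t_i$ and the observation history, has an acceptance threshold $\tau_i(t_i)$ (possibly adaptive), and the box is kept iff its value clears that threshold while feasibility is preserved.

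First, for the online $k$-Pandora's box problem, I would invoke the $k$-unit prophet inequality of Alaei~\cite{alaei2014bayesian}, which guarantees expected value at least $\left(1-\tfrac{1}{\sqrt{k+3}}\right)$ times the expected offline optimum and is implemented by per-item thresholds (his ``magician's'' policy); plugging this into Theorem~\ref{thm:reduction} yields the claimed $1-\tfrac{1}{\sqrt{k+3}}$ bound. Second, for the knapsack variant, where the type $t_i$ encodes the size of box $i$ and the kept sizes must sum to at most $C$, I would use the $1/5$-approximate threshold-based prophet inequality of D\"utting et al.~\cite{duetting2017prophet}; here one must note that in their algorithm the threshold for box $i$ legitimately depends on that box's size (which is precisely its type $t_i$, observed before the open/keep decision) and on the remaining capacity, both of which are permitted by our adaptive-threshold definition. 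Third, for the matroid variant, I would apply the $1/2$-approximate matroid prophet inequality of Kleinberg and Weinberg~\cite{KW-STOC12}, whose policy sets for each element a balanced threshold (roughly half the expected marginal value it could still contribute) and accepts iff the realized value clears that threshold and the element keeps the selected set independent --- again a threshold-based policy in our sense.

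The one substantive point to verify --- the place where the argument is not purely mechanical --- is that each cited prophet inequality really conforms to our definition of \emph{threshold-based}, rather than relying on randomized tie-breaking or on inspecting a value before deciding whether the item is even ``considered.'' For the cardinality and matroid cases this is essentially immediate from the standard statements of these results. For the knapsack case it takes a little care: one has to confirm that the size is part of the observable type (so that a size-dependent threshold is legal) and that the keep/skip rule for box $i$ can be phrased as a pure threshold test on $v^*_i$ given $(t_i,\text{history})$, which is exactly the form of D\"utting et al.'s policy. Once threshold-basedness is confirmed in each case, the corollary follows by three invocations of Theorem~\ref{thm:reduction}, with the approximation factor passing through unchanged.
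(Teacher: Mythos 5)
Your proposal is correct and matches the paper's approach exactly: the paper also obtains all three bounds by citing the cardinality \cite{alaei2014bayesian}, knapsack \cite{duetting2017prophet}, and matroid \cite{KW-STOC12} prophet inequalities and invoking Theorem~\ref{thm:reduction}. Your extra care in verifying that each cited policy is genuinely threshold-based (especially the size-dependent thresholds in the knapsack case) goes beyond the paper's one-line justification and is a worthwhile addition.
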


\section{Pandora's Box with Multiple Arms}
\label{sec:multiarm}

In the context reinforcement learning, we next consider a multi-arm
version of the Pandora's box problem. In each of $J$ rounds, $m$
boxes are presented.  There are therefore $mJ$ boxes in total.  At
most one box can be opened in each round.
The $m$ boxes presented in a given round are ordered; we can think of each box as having a $\emph{type}$ labeled $1$ through $m$.  All boxes of type $t$ have the same cost $c_t$, and also have the same value distribution $F_t$.  For notational convenience we'll write $v_{tj}$ for the value in the box of type $t$ presented at time $j$;  for convenience we assume $v_{tj} > 0$.  At the end of the $J$ rounds, the player can keep at most one prize for each type of box.  That is, if we write $S_t$ for the subset of boxes of type $t$ opened by the player, then the objective is to maximize
\[  \sum_t \max_{j \in S_t} v_{tj} - c_t \cdot |S_t|. \]
If no box of type $t$ is opened, we'll define $\max_{j \in S_t} v_{tj}$ (i.e., the prize for that type) to be 0.
This is a variant of the online Pandora's box problem with a (partition) matroid constraint on the set of prizes that can be kept, and also a constraint on the subsets of boxes that can be opened.

We can think of this problem as presenting boxes one at a time, where the boxes from a round are presented sequentially, with the additional constraint that one box per round can be selected.  Note that types here are not random, but would depend on the index of the box.  Our previous reduction applies, so that solving this multi-arm Pandora's box problem reduces to developing the related prophet inequality.
In this prophet inequality, boxes can be opened at no cost, but we must irrevocably choose whether or not to keep any given prize as it is revealed.  We can still keep at most one prize of each type, and we can still open at most one box in each time period.  Our question becomes: can we develop a threshold policy to achieve a constant-factor prophet inequality for this setting?  In this case, a threshold policy corresponds to choosing a threshold $\tau_t$ for each box type $t$, and accepting a prize $v_{tj}$ from an \emph{opened} box if and only if $v_{tj} > \tau_t$.

What is an appropriate benchmark for the prophet inequality?  Note that the sum of the best prizes of each type, ex post, might not be achievable by any policy due to the restriction on which boxes can be opened.  We will therefore compare against the following weaker benchmark.  We consider a prophet who must choose, in an online fashion, one box to open in each round, given knowledge of the prizes in previously opened boxes.  Then, after having opened one box on each of the $J$ rounds, the prophet can select the largest observed prize of each type.  In other words, the prophet has the advantage of being able to choose from among the opened boxes in retrospect, but must still open boxes in an online fashion.  Our goal is to obtain a constant approximation to the expected value enjoyed by such a prophet.

We begin with some observations about the choice of which box to open.  First, the optimal policy for the prophet is to open boxes greedily.  In particular, this policy can be implemented in time linear in $m$, each round.

\begin{lemma}
In each round $j$, it is optimal for the prophet to open a box of type $t$ that maximizes his expected value, as if the game were to end after time $j$.
\end{lemma}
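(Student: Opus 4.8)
The plan is to argue by an exchange/interchange argument that greedily opening the box that maximizes current expected contribution is optimal. The key structural fact is that, because prizes of different types are collected independently (the player keeps the best observed prize of each type), the prophet's total value decomposes as a sum over types, and the value contributed by type $t$ depends only on the set of type-$t$ boxes opened and the realized values inside them. Crucially, the decision of which type to open in a given round does not constrain future rounds at all: in every round all $m$ types are again available, so the only coupling across rounds is that we open exactly one box per round. Thus the problem is really: across $J$ rounds, choose a multiset assignment of rounds to types (adaptively, based on revealed values), and the payoff is additive across types with the per-type payoff being $\ex{\max \text{ of the opened type-}t\text{ values}} - c_t \cdot(\#\text{ opened})$.

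First I would set up notation for the prophet's value-to-go. Fix a round $j$ and condition on the history (the set of boxes opened so far and their realized values). For each type $t$, let $h_t$ denote the best type-$t$ prize seen so far (or $0$ if none opened), and note the marginal gain from opening one more box of type $t$ right now is exactly $g_t := \ex{\max(v_t, h_t)} - h_t - c_t$, where $v_t \sim F_t$; this is the increase in the ``keep the best so far'' value minus the cost. The greedy rule opens a type $t$ maximizing $g_t$. Second, I would prove optimality by a swapping argument on the decision tree: take any optimal policy $\pi$ and any node where in round $j$ it opens type $t'$ while some type $t$ has strictly larger current marginal gain; I want to show swapping to open $t$ in round $j$ (and deferring $t'$) does not decrease expected value. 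The cleanest way is an induction on the number of remaining rounds $J-j$: if the claim holds with one fewer round, then after opening $t$ in round $j$ we may assume the remaining $J-j-1$ rounds are played greedily; compare this to opening $t'$ in round $j$ and then playing greedily.

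The main obstacle is handling the fact that opening $t$ now changes $h_t$, hence changes all future greedy marginal gains for type $t$, so the two continuation processes are not identical and cannot be coupled round-by-round in the obvious way. The right fix is to observe that what matters for a type is only the \emph{set} of rounds assigned to it, not the order, together with a submodularity/diminishing-returns property: for a fixed type $t$, the function $k \mapsto \ex{\max \text{ of }k\text{ i.i.d. draws from }F_t}$ is concave (nondecreasing with nonincreasing increments), so its marginal value is monotone nonincreasing in the number already opened and is, in expectation, unaffected by \emph{when} those opens occur. Using this, I would argue that any adaptive policy can be transformed, without loss of expected value, into one that in each round opens a type of currently-maximal marginal gain: the total value is a sum over types of $\ex{f_t(k_t)} - c_t k_t$ where $f_t$ is the concave ``max of $k$ draws'' function and $k_t$ is the (random) number of type-$t$ opens, and subject to $\sum_t k_t \le J$ per realization, a greedy-by-marginal-gain allocation is optimal pointwise by the standard greedy-for-concave-separable-maximization argument; adaptivity does not help because future draws are independent of the current choice. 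Finally I would note the per-round running time: computing each $g_t$ is $O(1)$ given the $h_t$ and the (precomputable) functionals of $F_t$, so selecting the maximizer is $O(m)$ per round, giving the claimed linear-time implementation.
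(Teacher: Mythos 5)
Your setup of the per-type marginal gain $g_t = \ex{\max(v_t,h_t)} - h_t - c_t = \ex{(v_t-h_t)^+} - c_t$ is exactly right, and you correctly identify the obstacle in the naive round-$j$ swap (opening $t$ now changes $h_t$ and hence all future type-$t$ marginals). But the fix you propose does not work. You reduce the problem to maximizing $\sum_t \left(\ex{f_t(k_t)} - c_t k_t\right)$ with $f_t(k)$ the expected maximum of $k$ i.i.d.\ draws, and then invoke greedy-for-separable-concave, asserting that ``adaptivity does not help.'' Two things break. First, when $k_t$ is an adaptive stopping rule, the identity $\ex{\max_{i\le k_t} v_{ti}} = \ex{f_t(k_t)}$ fails: the number of type-$t$ opens is correlated with the realized draws (e.g.\ a policy that stops opening type $t$ precisely when it has seen a high value makes the conditional max given $k_t=1$ much larger than $f_t(1)$), so the prophet's value cannot be written as a function of the counts alone. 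Second, adaptivity genuinely helps here --- the greedy rule itself is adaptive, since $g_t$ depends on the realized best-so-far $h_t$ and not merely on how many type-$t$ boxes have been opened --- so an argument that only certifies optimality of the best non-adaptive allocation cannot establish optimality among adaptive policies.

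The paper closes the gap you ran into with a different and shorter exchange. The key observation is the one you half-state: the marginal gain of type $t$ decreases only when more boxes of type $t$ are opened; opening boxes of other types leaves $h_t$, hence $g_t$, unchanged. So if $t$ is the current maximizer at round $j$ and a policy never opens type $t$ again, then $t$ is still the maximizer at round $J$, and it would be weakly better to open it then; hence some optimal policy opens at least one type-$t$ box at some point from round $j$ onward. Now move that single opening forward to round $j$: the revealed value has the same distribution regardless of when the box is opened, and revealing it earlier only gives the policy more information for its remaining choices. This sidesteps the round-by-round coupling entirely. If you want to salvage your route, you need this deferral/information argument (or a genuinely careful interchange over the decision tree), not the non-adaptive concave relaxation.
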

\begin{proof}
	Note that the expected marginal gain of opening a box of type $t$ can only decrease over time, and only as more boxes of that type are opened.  Suppose $t$ is the box with maximum expected marginal value at time $j$.  Suppose further that the prophet does not open box $t$, and furthermore opens no box of type $t$ until the final round $J$.  Then $t$ will be the box with maximum expected marginal value in round $J$, and therefore it would be optimal to open the box of type $t$ on the last round.  This implies that it is optimal to open at least one box of type $t$, at some point between round $j$ and the end of the game.  The prophet is therefore at least as well off by opening the box of type $t$ immediately, since doing so does not affect the distribution of the revealed value, and this can only provides more information for determining which other boxes to open.  It is therefore (weakly) optimal to open box $t$ at time $j$.
\end{proof}

Similarly, once thresholds are fixed, an identical argument implies that the optimal threshold-based policy behaves greedily.  In particular, the optimal policy can be implemented in polynomial time, given an arbitrary set of thresholds.

\begin{lemma}
Suppose the player's policy is committed to selecting a prize of type $t$, from an opened box of type $t$, if and only if its value exceeds $\tau_t$.  Then, in each round $j$, it is optimal to open a box $t$, from among those types for which a prize has not yet been accepted, that maximizes his expected value as if the game were to end after time $j$.  That is, $t \in \arg\max_t  \{ \ex{v_{tj}\ |\ v_{tj} > \tau_t} \cdot \prob{v_{tj} > \tau_t} \}$.
\end{lemma}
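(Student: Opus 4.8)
The plan is to establish the claim by an interchange argument, mirroring the proof of the preceding (prophet) lemma but adapted to the threshold rule. First I would isolate the relevant one-step quantity. For a type $t$ whose prize has not yet been accepted, opening its box in round $j$ contributes $v_{tj}$ to the objective exactly when $v_{tj}>\tau_t$ and nothing otherwise, so the expected value of opening it ``as if the game ended after round $j$'' is $g_t := \ex{v_{tj}\ |\ v_{tj}>\tau_t}\cdot\prob{v_{tj}>\tau_t}$; since the type-$t$ values are i.i.d.\ across rounds and independent of the other types, $g_t$ depends neither on $j$ nor on the observed history. Moreover, once a prize of type $t$ has been accepted, opening another box of type $t$ contributes nothing, so the marginal value of type $t$ is $g_t$ while $t$ is available and $0$ afterwards — in particular non-increasing over time, exactly as in the prophet lemma. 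As preliminary reductions (as in that proof) one notes that, without loss of generality, a policy never opens a box of an already-accepted type and, as long as some available type with $g_t>0$ remains, opens some box each round.

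Next I would run the two-step interchange. Fix a round $j$ and a history that reaches the start of round $j$; let $A$ be the set of still-available types and $t^*\in\arg\max_{t\in A}g_t$ (if $\max_{t\in A}g_t=0$ no available type can ever be accepted and the claim holds trivially). Take any continuation policy. Step one: if on some sample path it never opens $t^*$ during rounds $j,\dots,J$, then $t^*$ is still available at round $J$ and is still a $g$-maximizer among the types available there, while the policy opens some available type at round $J$; replacing that last-round choice by $t^*$ weakly increases the expected objective and affects nothing downstream, so we may assume $t^*$ is opened at some round $j'\in\{j,\dots,J\}$ on every path. Step two: on a path with $j'>j$, the policy opens types $s_j,\dots,s_{j'-1}\neq t^*$ in rounds $j,\dots,j'-1$ and then $t^*$ in round $j'$; define a new policy that opens $t^*$ in round $j$, then replays the old one with the round-$j$ draw of $t^*$ standing in for its round-$j'$ draw and the probes $s_j,\dots,s_{j'-1}$ shifted one round later (in rounds $>j'$ the two policies are kept in sync). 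Because $v_{t^*,j}$ and $v_{t^*,j'}$ are i.i.d.\ and independent of everything else, this relabelling is a coupling under which each type is probed the same number of times with the same per-probe value distribution, hence the accepted prize of each type has the same law and the expected objective is unchanged; so we may assume $t^*$ is opened in round $j$.

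Finally I would assemble the conclusion: the interchange shows that for every round $j$ and every reaching history there is an optimal continuation that opens a $g$-maximizing available type at round $j$, and applying this successively for $j=1,2,\dots,J$ — each application leaving behavior in rounds $1,\dots,j-1$ untouched, so earlier rounds stay greedy — turns any policy into one that is greedy in every round without ever decreasing the expected objective. Hence the greedy policy, which in round $j$ opens a box of a type $t\in\arg\max_t\{\ex{v_{tj}\ |\ v_{tj}>\tau_t}\cdot\prob{v_{tj}>\tau_t}\}$ among types whose prize has not yet been accepted, is optimal.

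I expect the main obstacle to be making Step two rigorous: one must verify that moving the probe of $t^*$ back to round $j$ (and the intervening probes one round forward) genuinely preserves the joint distribution of ``which probe of each type first exceeds its threshold,'' and hence of the accepted prizes — the point being that the type-$t^*$ draws are i.i.d.\ across rounds and the types are mutually independent, so the re-labelling is measure-preserving. The opening constraint (one box per round) is automatically respected since both policies open exactly one box per round, and the keep-constraint (one prize per type) is built into the threshold rule.
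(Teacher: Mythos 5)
Your proposal is correct and follows essentially the same route as the paper: the paper proves the preceding (prophet) lemma by exactly this interchange argument --- the marginal gain of a type is non-increasing, so a maximizer not opened now would still be a maximizer at round $J$, hence is opened eventually, and advancing that opening to round $j$ cannot hurt --- and then states that "an identical argument" gives the present lemma. Your write-up merely makes the exchange step (the measure-preserving relabelling of the i.i.d.\ type-$t^*$ draws and the one-round shift of the intervening probes) explicit, which the paper leaves informal.
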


We now claim that there are thresholds that yield a $2$-approximate prophet inequality for this setting.  First, some notation.  By the principle of deferred randomness, we can think of the value of the prize in any given box as only being determined at the moment the box is opened.  With this in mind, we will write $w_{tk}$ for the value observed in the $k$'th box of type $t$ opened by the decision-maker.  For example, $w_{t1}$ is the value contained in whichever box of type $t$ is opened first, regardless of the exact time at which it is opened.  Note that the behavior of any online policy is fully described by the profile of values $\mathbf{w} = (w_{tk})$, and each $w_{tk}$ is a value drawn independently from distribution $F_t$.  For such a profile $\mathbf{w}$, %of prize realizations,
write $y^*_{tk}(\mathbf{w})$ for the indicator variable that is $1$ if the prophet opens at least $k$ boxes of type $t$, and keeps the $k$'th one opened.  Then the expected value enjoyed by the prophet is
\[ \ex[\mathbf{w}]{\sum_k \sum_{t} w_{tk} y^*_{tk}(\mathbf{w})}. \]
For each box type $t$, we will set the threshold
\[ \tau_t = \frac{1}{2} \ex[\mathbf{w}]{\sum_k w_{tk} y^*_{tk}(\mathbf{w})}. \]
That is, $\tau_t$ is half of the expected value obtained by the prophet from boxes of type $t$.  

To prove that these thresholds achieve a good approximation to the prophet's welfare, it will be useful to analyze the possible correlation between the number of boxes of type $t$ opened by the prophet, and whether any prize of type $t$ is kept by the threshold algorithm.  To this end, write $z_{tk}^*(\mathbf{w})$ for the indicator that the prophet opens at least $k$ boxes of type $t$.  Note that $z_{tk}^*(\mathbf{w}) \geq y_{tk}^*(\mathbf{w})$ for all $t$, $k$, and $\mathbf{w}$.  We'll also write $Q_t(\mathbf{w})$ for the indicator variable that is $1$ if $w_{tk} z_{tk}^*(\mathbf{w}) \leq \tau_t$ for all $k$.  That is, $Q_t(\mathbf{w}) = 1$ if no box of type $t$ opened by the prophet has value greater than $\tau_t$, and hence the threshold algorithm does not keep any prize of type $t$.  The following lemma shows that $Q_t$ is positively correlated with $z_{tk}^*$, for every $t$ and $k$.

\begin{lemma}
	\label{lem.cor}
	For all $t$ and $k$, $\ex[\mathbf{w}]{Q_t(\mathbf{w}) \cdot z^*_{tk}(\mathbf{w})} \geq \ex[\mathbf{w}]{Q_t(\mathbf{w})} \cdot \ex[\mathbf{w}]{z^*_{tk}(\mathbf{w})}$.
\end{lemma}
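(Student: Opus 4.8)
The plan is to fix a box type $t$, condition on the realized values of the other types, and reduce the claim to a one–dimensional statement about a stopping rule applied to the i.i.d.\ sequence of type-$t$ values. First I would condition on the profile $\mathbf{w}_{-t}$ of realized values of all box types $s\neq t$. Once $\mathbf{w}_{-t}$ is fixed, the expected marginal value of opening an $r$-th box of each such type $s$ is a fixed number, non-increasing in $r$; by the greedy-optimality lemma the prophet therefore opens its $k$-th type-$t$ box exactly while the current type-$t$ marginal value $\ex[X\sim F_t]{\max(X-m_{k-1},0)}$ stays above a level that does not depend on $w_{t1},\dots,w_{t,k-1}$ and is non-decreasing in $k$; here $m_{k-1}:=\max(w_{t1},\dots,w_{t,k-1})$, and the marginal value is non-increasing in $m_{k-1}$. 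Composing these monotonicities produces a non-increasing barrier $\psi_1\ge\psi_2\ge\cdots$ (a deterministic function of $\mathbf{w}_{-t}$) with $z^*_{tk}(\mathbf{w})=\mathbb{1}[N_t\ge k]=\mathbb{1}[m_{k-1}\le\psi_k]$, where $N_t$ denotes the number of type-$t$ boxes the prophet opens, and with $Q_t(\mathbf{w})=\mathbb{1}[m_{N_t}\le\tau_t]$ (no opened type-$t$ box exceeds $\tau_t$ iff their maximum does not). So, conditionally on $\mathbf{w}_{-t}$, $N_t$ is the first time the running maximum of an i.i.d.\ $F_t$-sequence crosses a shrinking barrier, $z^*_{tk}$ records whether this fails to happen within $k-1$ steps, and $Q_t$ records whether the running maximum at the stopping time is at most $\tau_t$.

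\textbf{Reduction and the key monotonicity.} Since the events $\{N_t\ge k\}$ are nested, $\cov{Q_t,z^*_{tk}}\ge0$ would follow from a single monotonicity fact about the stopped sequence: $q(j):=\prob{m_{N_t}\le\tau_t\mid N_t=j}$ is non-decreasing in $j$. Indeed $\ex{Q_t\mid N_t\ge k}$ is then a convex combination of the $q(j)$ with $j\ge k$, hence non-decreasing in $k$, and taking $k=0$ gives $\ex{Q_t\mid N_t\ge k}\ge\ex{Q_t}$. To prove the monotonicity I would compute the conditional law directly: conditioning on $N_t=j$ is conditioning the i.i.d.\ sequence on $\{m_{j-1}\le\psi_j\}\cap\{m_j>\psi_{j+1}\}$, which gives $q(j)=0$ when $\psi_{j+1}>\tau_t$ and otherwise
\[ q(j)=\frac{F_t(\min(\psi_j,\tau_t))^{\,j-1}\,F_t(\tau_t)-F_t(\psi_{j+1})^{\,j}}{F_t(\psi_j)^{\,j-1}-F_t(\psi_{j+1})^{\,j}}, \]
after which one checks $q(j)\le q(j+1)$ algebraically; the intuition is that, because $\psi$ is non-increasing, a larger $j$ forces the pre-stopping maximum $m_{j-1}$ into a smaller window while only deferring by one step the fresh draw that could carry the maximum above $\tau_t$. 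Equivalently one can couple the conditional laws of $m_{N_t}$ given $N_t=j$ and given $N_t=j+1$ so that the former stochastically dominates the latter.

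\textbf{Main obstacle.} The monotonicity of $q(j)$ is where I expect the real work to lie. A tempting shortcut — deducing positive correlation from an FKG/association inequality, using that both $z^*_{tk}$ and the ``all opened boxes at most $\tau_t$'' event are decreasing in the type-$t$ values — does not go through, because $Q_t=\mathbb{1}[m_{N_t}\le\tau_t]$ is \emph{not} a monotone function of those values: raising an early value can shorten the prophet's run and thereby \emph{lower} the stopped maximum. One is thus forced into the explicit (or coupling) analysis above. A second point to handle is that $\psi$ depends on $\mathbf{w}_{-t}$, so one should check that restoring the randomness in $\mathbf{w}_{-t}$ does not interfere with the inequality; this relies on $z^*_{tk}$ being measurable with respect to $\mathbf{w}_{-t}$ together with only the first $k-1$ type-$t$ values, so that $Q_t$ remains ``fresh'' relative to the decision recorded by $z^*_{tk}$.
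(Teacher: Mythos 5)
Your setup---conditioning on $\mathbf{w}_{-t}$ and characterizing $z^*_{tk}$ via a non-increasing barrier, so that $z^*_{tk}=1$ exactly when $m_{k-1}\le\psi_k$---is precisely the first half of the paper's proof (the paper's $h_{J-k}$ is your $\psi_k$), and your observation that $Q_t$ is not a monotone function of the type-$t$ values is correct. The gap is the step you defer as ``one checks algebraically'': the claim that $q(j)=\prob{m_{N_t}\le\tau_t \mid N_t=j}$ is non-decreasing in $j$ is \emph{false}. In the regime $\tau_t\ge\psi_j$ your own formula rearranges to $q(j)=T-\frac{v(1-T)}{u-v}$ with $T=F_t(\tau_t)$, $u=\prob{N_t\ge j}$, $v=\prob{N_t\ge j+1}$, so $q(j)\le q(j+1)$ is equivalent to $\prob{N_t\ge j+1}^2\ge \prob{N_t\ge j}\cdot\prob{N_t\ge j+2}$, i.e.\ log-concavity of the survival function of $N_t$, which an arbitrary non-increasing barrier need not satisfy. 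Concretely, take $F_t$ uniform on $[0,1]$, $\psi_1=1$, $\psi_2=0.9$, $\psi_3=\psi_4=0.5$, $\tau_t=0.95$: then $q(2)=0.605/0.65\approx 0.931$ while $q(3)=0.1125/0.125=0.9$. The same example defeats the proposed stochastic-dominance coupling. (The lemma itself survives here---one can check $\ex{Q_t\mid N_t\ge k}\ge\ex{Q_t}$ for every $k$ in this instance---but it does not follow from monotonicity of $q$, which is a strictly stronger statement, so your reduction loses the theorem.)

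The route you dismissed is, after a repair, the one the paper actually takes. You are right that FKG cannot be applied to $Q_t$ directly. The paper's fix is to replace $Q_t$ by the genuinely decreasing event $A=\{\max_{r<J} w_{tr}\le\tau_t\}$---\emph{all} of the first $J$ type-$t$ values are at most $\tau_t$, whether or not the corresponding boxes are opened---and to argue that $\ex{z^*_{tk}\mid A}=\ex{z^*_{tk}\mid Q_t}$ because the two events differ only in constraints on values in unopened boxes. Positive correlation of $z^*_{tk}$ with $A$ then follows from a one-line monotone coupling: resample every value exceeding $\tau_t$ until it falls below $\tau_t$, which only lowers values, while $z^*_{tk}$ is a decreasing event of the type-$t$ values. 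The missing ingredient in your plan is this substitution of a monotone surrogate for $Q_t$; with it, no computation of the conditional law of the stopped maximum is needed, and the delicate (and, as shown above, false) monotonicity of $q(j)$ is avoided entirely.
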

\begin{proof}
	Fix the values of $w_{\ell k}$ for all $\ell \neq t$.  Note that this also fixes the choice of which box the prophet will open, on any round in which the prophet chooses not to open a box of type $t$.  Due to the prophet's greedy method of opening boxes, at every time $k$, the prophet will choose to open the box of type $t$ if and only if the maximum value from a box of type $t$, seen so far (or $0$ if no box of type $t$ has been opened yet), is below a threshold determined by the values observed from the other boxes.  In other words, the values $w_{\ell k}$ for $\ell \neq k$ define a sequence of non-decreasing thresholds $h_0, h_1, h_2, \dotsc, h_J$ with the following property.  Suppose, at time $j$, the prophet has previously opened $s < j$ boxes of type $t$, and has therefore opened $j - s - 1$ boxes of types other than $t$.  Then the prophet will open box $t$ at time $j$ if and only if
	\begin{equation}
	\label{eq.thresh}
	\max_{r \leq s}\{ w_{tr} \} < h_{j-s-1}.
	\end{equation}
	Here and below, we'll take the maximum of an empty set to be $0$.
	
	We now claim that the prophet opens $k$ or more boxes of type $t$, at or before time $J$, if and only if $\max_{r < k} \{ w_{tr} \} < h_{J-k}$.  We prove this claim by induction on $J$.  The case $J = 1$ is immediate, since box $t$ is opened first if and only if $h_0 > 0$.  Now suppose $J > 1$.  If $\max_{r < k} \{w_{tr}\} \geq h_{J-k}$, then $\max_{r \leq k} \{w_{tr}\} \geq h_{j-k-1}$ at all times $j \leq J$, so by \eqref{eq.thresh} the prophet never chooses to open the $k$'th box of type $t$.  In the other direction, note that if $\max_{r < k} \{w_{tr}\} < h_{J-k}$, then $\max_{r < k-1} \{w_{tr}\} < h_{(J-1) - (k-1)}$, so by induction the prophet opens at least $k-1$ boxes by time $J-1$.  This means that either the prophet has already opened $k$ boxes of type $t$ before time $J$ (and we are done), or it has opened exactly $k-1$ boxes of type $t$ before time $J$.  In the latter case, since we have $\max_{r \leq k-1} \{w_{tr}\} < h_{J - (k-1) - 1}$, we conclude from \eqref{eq.thresh} that the prophet opens box $t$ at time $J$, as required.
	
	We are now ready to return to $z^*_{tk}$ and $Q_{t}$.  From the claim above, we have that $z^*_{tk}(\mathbf{w}) = 1$ if and only if $\max_{r < k} \{w_{tr}\} < h_{J-k}$.  It suffices to show that this event is only more likely to occur if we condition on the event $[ Q_t(\mathbf{w}) = 1 ]$.  We will actually consider a stronger event $A$, which is that $\max_{r < J} \{w_{tr}\} \leq \tau_t$.  Note that event $A$ is more stringent than the event $[ Q_t(\mathbf{w}) = 1 ]$, since event $A$ requires that all of the first $J$ values from box $t$ are at most $\tau_t$, whether or not those boxes are opened.  But since these events differ only on values that are in unopened boxes, we have $\ex{z^*_{tk}\ |\ A} = \ex{z^*_{tk}\ |\ Q_t}$, so it suffices to prove that $z^*_{tk}$ is positively correlated with event $A$.
	
	To show that $\ex{z^*_{tk}\ |\ A} \geq \ex{z^*_{tk}}$, we will couple outcomes with and without this conditioning on $A$.  To do so, we imagine first drawing a sequence $(w_{t1}, w_{t2}, \dotsc)$, then re-drawing any values that are greater than $\tau_t$ until all values are at most $\tau_t$; say $(w_{t1}', w_{t2}', \dotsc)$ is the modified profile.  Note that since $w_{ts}' \leq w_{ts}$ for all $s$, we have that if $[ \max_{r < k} \{w_{tr}\} < h_{J-k} ]$, then it must also be that $[ \max_{r < k} \{w_{tr}'\} < h_{J-k} ]$.  So the expected value of $z^*_{tk}$ can only increase as a result of this transformation.  Since $(w_{t1}, w_{t2}, \dotsc)$ is chosen uniformly from all profiles, and $(w_{t1}', w_{t2}', \dotsc)$ is distributed uniformly from profiles that satisfy event $A$, we conclude that
	%to change from being unsatisfied to being satisfied, we conclude that conditioning on this event can only increase the probability that $\max_{s < t} w_{jr} < h_{T-t}$, and hence
	$\ex{z^*_{tk}\ |\ A} \geq \ex{z^*_{tk}}$ as required.
\end{proof}

Finally, we can prove the multi-arm prophet inequality.

\begin{theorem}
\label{thm.multiarm}
	The optimal threshold policy, using the thresholds described above, achieves at least half of the expected value enjoyed by the prophet for the multi-armed prophet inequality problem.
\end{theorem}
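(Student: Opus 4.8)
The plan is to use the preceding structural lemma to reduce the theorem to the analysis of one conveniently chosen threshold policy, and then to bound its expected value type by type. Since, for the fixed thresholds $\tau_t$, the optimal threshold policy is weakly better than any other policy with the same acceptance rule (``keep a type-$t$ prize iff its value exceeds $\tau_t$''), it suffices to lower-bound the value of the particular such policy $\alg$ that opens boxes in exactly the order the prophet's greedy rule dictates. This is a legitimate policy because the prophet's rule depends only on previously revealed values, which $\alg$ also observes, and it respects the one-box-per-round and one-prize-per-type constraints. Under the coupling by the value profile $\mathbf{w}=(w_{tk})$, this $\alg$ opens exactly the same type-$t$ boxes as the prophet, for every $t$; consequently $\alg$ keeps no prize of type $t$ precisely on the event $Q_t$, and when it does keep one its value is $w_{tk^*_t}>\tau_t$, where $k^*_t$ is the first above-threshold index among the opened type-$t$ boxes. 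Writing $G_t = \sum_k w_{tk}\,y^*_{tk}(\mathbf{w}) = \max_k w_{tk}\,z^*_{tk}(\mathbf{w})$ for the prophet's value from type $t$ (so that $\tau_t = \tfrac12\ex[\mathbf{w}]{G_t}$), the prophet earns $\sum_t \ex{G_t}$ while $\alg$ earns $\sum_t \ex{w_{tk^*_t}\,\mathbf{1}[\neg Q_t]}$, so it suffices to prove the per-type inequality $\ex{w_{tk^*_t}\,\mathbf{1}[\neg Q_t]} \ge \tfrac12\ex{G_t} = \tau_t$; summing over $t$ then yields $u_{\alg} \ge \sum_t \tau_t = \tfrac12\sum_t\ex{G_t}$, exactly half the prophet's value.

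To establish the per-type inequality I would first fix the realized values of all boxes of types $\ell\neq t$. As in the proof of Lemma~\ref{lem.cor}, this reduces the prophet's type-$t$ opening decisions to a non-decreasing threshold sequence $h_0\le h_1\le\cdots$ (the prophet opens the $k$-th type-$t$ box iff $\max_{r<k}w_{tr}<h_{J-k}$), so the number $N_t$ of opened type-$t$ boxes is determined by the i.i.d.\ sequence $w_{t1},w_{t2},\dots$ in a value-dependent way. Conditioned on the external values, the claim becomes one-dimensional: scanning $w_{t1},w_{t2},\dots$ and keeping the first value exceeding the fixed threshold $\tau_t$ must, in expectation, recover at least half of $\max_{k\le N_t}w_{tk}$. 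For a \emph{fixed} number of draws this is the classical mean-threshold prophet inequality, since $\tau_t$ is half of the expected maximum. The extra ingredient needed because $N_t$ is not fixed is precisely Lemma~\ref{lem.cor}: the event $Q_t$ that all opened type-$t$ values lie below $\tau_t$ is positively correlated with $z^*_{tk}$, i.e.\ with opening more type-$t$ boxes. I would use this to argue that conditioning on the favorable event $\neg Q_t$ can only shrink the effective number of draws, so the global threshold $\tau_t$ is never too conservative for the realization at hand and the ``half of the maximum'' guarantee persists; taking the expectation back over the external values completes the per-type bound.

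The main obstacle is exactly this coupling between how many type-$t$ boxes the prophet opens and the realized type-$t$ values: one cannot simply condition on $N_t$ and invoke a prophet inequality with a threshold tuned to $N_t$ draws, because the single global threshold $\tau_t = \tfrac12\ex{G_t}$ has to work simultaneously across all realizations of $N_t$. Lemma~\ref{lem.cor} is the tool that resolves this, by ensuring that the bad realizations---many type-$t$ boxes opened, all of them below $\tau_t$---are exactly the ones on which the prophet itself earns little ($G_t\le\tau_t$ there), so nothing is lost in the comparison. A secondary and more routine point is the bookkeeping observation that keeping the \emph{first} above-threshold box rather than the best one seen loses nothing, because every discarded earlier type-$t$ box has value at most $\tau_t$, hence at most the value of the eventual prize.
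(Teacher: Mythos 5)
Your skeleton matches the paper's proof: reduce the optimal threshold policy to the policy that opens exactly the boxes the prophet opens, couple everything through the profile $\mathbf{w}$, aim for the per-type bound $\ex{\mathrm{ALG}_t}\geq\tau_t=\tfrac12\ex{G_t}$ (writing $G_t=\sum_k w_{tk}y^*_{tk}(\mathbf{w})$ as in your notation), and invoke Lemma~\ref{lem.cor} to control the dependence between how many type-$t$ boxes are opened and their values. The gap is that the one step carrying all the quantitative weight is never performed. The paper's argument decomposes $\mathrm{ALG}_t$ into a base term $\tau_t\cdot\prob{\neg Q_t}=\tau_t(1-q_t)$ plus a surplus term, and shows the surplus is at least $q_t\left(\ex{G_t}-\tau_t\right)=q_t\tau_t$ via the chain: the surplus $(w_{tk}-\tau_t)^+$ of the $k$'th opened type-$t$ box is collected whenever $z^*_{tk}=1$ and none of the first $k-1$ opened type-$t$ boxes was kept; that event contains $Q_t$ and, like $z^*_{tk}$, is determined by previously revealed values, so it is independent of $w_{tk}$; factoring by this independence and then applying $\ex{z^*_{tk}Q_t}\geq q_t\,\ex{z^*_{tk}}$ from Lemma~\ref{lem.cor} gives surplus $\geq q_t\sum_k\ex{(w_{tk}-\tau_t)^+z^*_{tk}}\geq q_t\left(\ex{G_t}-\tau_t\right)$. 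Your substitute for this computation --- ``conditioning on $\neg Q_t$ can only shrink the effective number of draws, so the global threshold is never too conservative'' --- is not a derivation of the needed inequality $\ex{\mathrm{surplus}_t}\geq q_t\tau_t$, and it is not clear how to turn it into one; likewise, the observation that $G_t\leq\tau_t$ on the event $Q_t$ is true but does not by itself produce the bound.

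A second, related misstep: the claim ``for a fixed number of draws this is the classical mean-threshold prophet inequality, since $\tau_t$ is half of the expected maximum'' is false in the conditional world you set up. After fixing the external values (and hence the threshold sequence $h_0,h_1,\dots$), $\tau_t$ is half of the \emph{unconditional} $\ex{G_t}$, not half of the conditional expected maximum of that sub-instance, so the classical guarantee does not hold sub-instance by sub-instance in either direction. You flag this obstacle yourself, but flagging it is not fixing it. The repair is precisely that the accounting must be done globally in expectation --- the conditioning on external values is needed only inside the proof of Lemma~\ref{lem.cor} (to get the monotone threshold structure), not in the main surplus bound. You should also state explicitly the independence of $w_{tk}$ from the events $z^*_{tk}$ and ``no earlier type-$t$ prize kept,'' since without it the factorization that lets Lemma~\ref{lem.cor} enter is unavailable. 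Your remaining bookkeeping points (optimality of mimicking the prophet's openings, and that keeping the first above-threshold value loses nothing) are correct and match the paper.
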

\begin{proof}
	As above, write $y_{tk}^*(\mathbf{w})$ for the indicator variable for if the prophet opens at least $k$ boxes of type $t$ and keeps the $k$'th one opened, and write $z_{tk}^*(\mathbf{w})$ for the indicator that the prophet opens at least $k$ boxes of type $t$.  We'll show a $2$-approximation for the policy that uses the given thresholds, but chooses to open the same boxes that the prophet would open.  This will imply the theorem, since the optimal policy that uses these thresholds $\tau_t$ would do at least as well as the policy that opens the same boxes as the prophet.
	
	Given this choice of what boxes to open, let $y_{tk}(\mathbf{w})$ be the indicator variable that is $1$ if the threshold algorithm opens at least $k$ boxes of type $t$ and keeps the $k$'th one opened.  The total value obtained by the threshold algorithm is then
	\[ \ex[\mathbf{w}]{\sum_k \sum_{t} w_{tk} y_{tk}(\mathbf{w})}. \]
	Note that the threshold algorithm might not choose a prize of every type, since it might be that all observed prizes of type $t$ are less than $\tau_t$.
	Write $Q_{tk}(\mathbf{w})$ for the indicator variable that is $1$ if none of the first (up to) $k$ boxes of type $t$ opened by the threshold algorithm are kept by the algorithm.  We'll also write $Q_t(\mathbf{w})$ for the indicator variable that is $1$ if no prize of type $t$ is kept by the threshold algorithm at any time.  In particular, $Q_t(\mathbf{w}) \leq Q_{tk}(\mathbf{w})$ for all $k$.  Finally, we'll write
	$q_t = \ex[\mathbf{w}]{Q_t(\mathbf{w})}$.
	
	We decompose the value generated by the algorithm into (a) the value attributable to the thresholds, and (b) any value in excess of the thresholds.  That is,
	\begin{align*}
	&\ex[\mathbf{w}]{\sum_k \sum_{t} w_{tk} y_{tk}(\mathbf{w})} = \\
	&\ex[\mathbf{w}]{\sum_k \sum_{t} \tau_t y_{tk}(\mathbf{w})} + \ex[\mathbf{w}]{\sum_k \sum_{t} (w_{tk} - \tau_t) y_{tk}(\mathbf{w})}.
	\end{align*}
	For the first term, we have
	\begin{align}
	\label{eq.term1}
	&\ex[\mathbf{w}]{\sum_k \sum_{t} \tau_t y_{tk}(\mathbf{w})}
	= \sum_t \tau_t \ex[\mathbf{w}]{\sum_k y_{tk}(\mathbf{w})} \nonumber\\
	&= \sum_t \tau_t (1-q_t) \nonumber\\
	&= \frac{1}{2}\sum_t (1-q_t) \ex[\mathbf{w}]{\sum_k w_{tk} y^*_{tk}(\mathbf{w})}
	\end{align}
	The second term is more interesting.
	We have
	\begin{align}
	\label{eq.term2}
	&\ex[\mathbf{w}]{\sum_k \sum_{t} (w_{tk} - \tau_t) y_{tk}(\mathbf{w})}\nonumber\\
	& \geq \sum_t \sum_k \ex[\mathbf{w}]{ (w_{tk} - \tau_t)^+ \cdot z_{tk}^*(\mathbf{w}) \cdot Q_{tk}(\mathbf{w})} \nonumber\\
	& = \sum_t \sum_k \ex[\mathbf{w}]{(w_{tk} - \tau_t)^+} \cdot \ex[\mathbf{w}]{z_{tk}^*(\mathbf{w}) \cdot Q_{tk}(\mathbf{w})} \nonumber\\
	& \geq \sum_t \sum_k \ex[\mathbf{w}]{ (w_{tk} - \tau_t)^+} \cdot \ex[\mathbf{w}]{z_{tk}^*(\mathbf{w}) \cdot Q_{t}(\mathbf{w})} \nonumber\\
	& \geq \sum_t \sum_k \ex[\mathbf{w}]{ (w_{tk} - \tau_t)^+} \cdot \ex[\mathbf{w}]{z_{tk}^*(\mathbf{w})} \cdot \ex[\mathbf{w}]{ Q_{t}(\mathbf{w}) } \nonumber\\
	& = \sum_t q_t \sum_k \ex[\mathbf{w}]{ (w_{tk} - \tau_t)^+ \cdot z_{tk}^*(\mathbf{w})} \nonumber\\
	& \geq \sum_t q_t \sum_k \ex[\mathbf{w}]{ (w_{tk} - \tau_t)^+ \cdot y_{tk}^*(\mathbf{w})} \nonumber\\
	& \geq \sum_t q_t \left( \ex[\mathbf{w}]{\sum_k w_{tk} y_{tk}^*(\mathbf{w})} - \tau_t \right) \nonumber\\
	& = \frac{1}{2} \sum_t q_t \ex[\mathbf{w}]{ \sum_k w_{tk} y_{tk}^*(\mathbf{w}) },
	%& \geq \sum_j \max_t E_{\mathbf{v},\mathbf{v'}}[ (v_{jt} - \tau_j) z^*_{jt}(\mathbf{v'}) Q_{jt}(\mathbf{v'}) ] \\
	%%& \geq \sum_j \max_t E_{\mathbf{v},\mathbf{v'}}[ (v_{jt} - \tau_j) y'_{jt}(\mathbf{v}, \mathbf{v'}) ] \\
	%& = \sum_j \max_t E_\mathbf{v}[ v_{jt} ] E_{\mathbf{v}}[z_{jt}^*(\mathbf{v}) Q_{jt}(\mathbf{v}) ] - \sum_j \tau_j \max_t E_{\mathbf{v}}[z_{jt}^*(\mathbf{v}) Q_j(\mathbf{v}) ] \\
	%& \geq \sum_j \max_t E_\mathbf{v}[ v_{jt} ] E_\mathbf{v}[z_{jt}^*(\mathbf{v}) Q_{jt}(\mathbf{v}) ]  - \sum_j \tau_j q_j
	\end{align}
	where the first inequality is linearity of expectation and the definition of $y$, the equality on the second line uses the fact that $w_{tk}$ is independent of the values that occur earlier (which determine $z_{tk}^*$ and $Q_{tk}$, and the inequality on the fourth line is Lemma~\ref{lem.cor}.
	The result now follows by adding \eqref{eq.term1} and \eqref{eq.term2}, yielding
	\begin{align*}
	\ex[\mathbf{w}]{\sum_k \sum_{t} w_{tk} y_{tk}(\mathbf{w})}
	&\geq \frac{1}{2} \sum_t \ex[\mathbf{w}]{\sum_k w_{tk} y_{tk}^*(\mathbf{w})}\\
	&=  \frac{1}{2} \ex[\mathbf{w}]{\sum_k \sum_t w_{tk} y_{tk}^*(\mathbf{w})}
	\end{align*}
	as required.
\end{proof}

We note that Theorem~\ref{thm.multiarm} is constructive, and the thresholds can be computed to within an arbitrarily small error in polynomial time.  For example, this can be done by sampling instances of $\mathbf{w}$, simulating the behavior of the prophet, and calculating an empirical average of the associated threshold values.  See~\cite{duetting2017prophet} for further details on this approach.

Applying the reduction from Theorem~\ref{thm:reduction} to the prophet inequality in Theorem~\ref{thm.multiarm} yields a polynomial time algorithm for the multi-arm online Pandora's box problem.  

\begin{corollary}
There is an $2$-approximation algorithm for the online multi-arm Pandora's box problem.
\end{corollary}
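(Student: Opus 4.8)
The plan is to derive the corollary by feeding the multi-armed prophet inequality of Theorem~\ref{thm.multiarm} into the reduction of Theorem~\ref{thm:reduction}; essentially no new mathematics is needed, only a careful check that the hypotheses line up. First I would cast the online multi-arm Pandora's box problem as an instance of the general online Pandora's box problem of Section~\ref{sec:reduction}: the $mJ$ boxes arrive one at a time (the $m$ boxes of round $j$ presented consecutively); the type $t_i$ of a box is its deterministic label in $\{1,\dots,m\}$ together with its round index; the family $\mathbb{S}$ of feasible opened sets is the partition matroid ``at most one box opened per round''; and the family $\mathbb{R}$ of feasible kept sets is the partition matroid ``at most one prize kept per type.'' Both constraints depend only on indices and types, not on realized values, so Theorem~\ref{thm:reduction} applies verbatim.

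Second, I would verify that Theorem~\ref{thm.multiarm} produces exactly the ingredient Theorem~\ref{thm:reduction} asks for: a threshold-based $\tfrac12$-approximation for the corresponding zero-cost prophet inequality under these constraints. The policy of Theorem~\ref{thm.multiarm} keeps $v_{tj}$ iff $v_{tj}>\tau_t$, so it is threshold-based in the sense defined just before Theorem~\ref{thm:reduction}, with $\tau_i(t_i)=\tau_{t_i}$ depending only on the type; and by the two greedy lemmas of Section~\ref{sec:multiarm} its optimal implementation opens boxes greedily, respecting ``one box per round'' and running in polynomial time per round. The only substantive point is that the benchmark in Theorem~\ref{thm.multiarm} --- a prophet who opens one box per round \emph{online} and then keeps the largest observed prize of each type --- dominates the offline optimum $\opt^*$ of the commitment problem on $F^*_1,\dots,F^*_n$. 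This holds because $\opt^*$ also opens at most one box per round; since the $J$ rounds are i.i.d.\ we may relabel so that $\opt^*$ processes them in increasing order, and then the first greedy lemma of Section~\ref{sec:multiarm} shows its opening strategy is no better than the greedy online rule, after which keeping the per-type maximum of the opened boxes is the best a feasible $R\subseteq S$ can do. Hence the Theorem~\ref{thm.multiarm} benchmark is at least $u_{\opt^*}$, which is what the reduction needs.

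Third, assembling the pieces: by Theorem~\ref{thm:reduction} the threshold-based $\tfrac12$-approximation of Theorem~\ref{thm.multiarm} transfers to a $\tfrac12$-approximation --- i.e., a $2$-approximation --- for the online multi-arm Pandora's box problem (via $F^*_i=(\min(v_i,\sigma_i),0,t_i)$ and Theorem~\ref{thm:commit}). Polynomial running time follows from the remark after Theorem~\ref{thm.multiarm} (the thresholds $\tau_t$ are estimable to arbitrary precision by sampling profiles $\mathbf{w}$ and simulating the prophet), the polynomial-time greedy threshold policy, and the constructive, polynomial-time nature of the reduction in Theorem~\ref{thm:reduction}.

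I expect the main obstacle to be precisely the benchmark-comparison step of the second paragraph: making sure that forcing the prophet of Theorem~\ref{thm.multiarm} to open boxes online, round by round, loses nothing relative to the fully adaptive offline optimum, which rests on the exchangeability of the i.i.d.\ rounds together with the greedy-optimality lemma. The remaining work --- identifying the two partition-matroid constraints and confirming the policy meets the ``threshold-based'' definition --- is routine bookkeeping.
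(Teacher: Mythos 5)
Your proposal is correct and follows essentially the same route as the paper, which derives the corollary in one line by plugging the threshold-based $\tfrac12$-approximation of Theorem~\ref{thm.multiarm} into the reduction of Theorem~\ref{thm:reduction}. The extra care you take in matching the prophet benchmark of Theorem~\ref{thm.multiarm} to the offline optimum required by the reduction is a legitimate hypothesis check that the paper leaves implicit, but it does not change the argument.
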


\section{Conclusion and Open Problems}
\label{sec:conclusion}

We have presented a general reduction method for translating
approximation algorithms in the prophet inequality setting to
corresponding approximation algorithms in the online Pandora box
setting with applications to information learning. Further we have
introduced a novel multi-armed bandit Pandora box variation in the
context of reinforcement learning where our methods apply. Along the
way, we have considered many generalizations of the Pandora box
problem, including allowing distributions on costs and the use of
types. 

One open challenge is to relax the assumption that values, costs, 
and/or types of different boxes are independent.  One could also
generalize to objectives beyond maximizing the sum of the values 
selected minus the cost of opening boxes.  For example, what 
happens if the cost of opening a box increases as more boxes are opened?
Finally, the multi-arm prophet inequality is a special case of a more general
class of stochastic optimization problems, and it would be interesting to
extend to more general scenarios.  For example, can one extend the result
to distributions that vary across time, or to general matroid constraints over 
the set of boxes that can be opened?

\section*{Acknowledgments}
Hossein Esfandiari was supported in part by NSF grants CCF-1320231 and CNS-1228598.

MohammadTaghi HajiAghayi was supported in part by NSF CAREER award CCF-1053605, NSF AF:Medium grant CCF-1161365, NSF BIGDATA grant IIS-1546108, NSF SPX grant CCF-1822738, UMD AI in Business and Society Seed Grant and UMD Year of Data Science Program Grant.
Part of this work was done while visiting Microsoft Research New England.

Michael Mitzenmacher was supported in part by NSF grants CCF-1563710,
CCF-1535795, CCF-1320231, and CNS-1228598; also, part of this work was done while visiting Microsoft Research New England.

%References and End of Paper
%These lines must be placed at the end of your paper
\bibliographystyle{aaai}
\bibliography{Bibliography}

\begin{thebibliography}{}

\bibitem[\protect\citeauthoryear{Alaei, Hajiaghayi, and
  Liaghat}{2012}]{alaei2012online}
Alaei, S.; Hajiaghayi, M.; and Liaghat, V.
\newblock 2012.
\newblock Online prophet-inequality matching with applications to ad
  allocation.
\newblock In {\em ACM EC},  18--35.

\bibitem[\protect\citeauthoryear{Alaei}{2014}]{alaei2014bayesian}
Alaei, S.
\newblock 2014.
\newblock Bayesian combinatorial auctions: Expanding single buyer mechanisms to
  many buyers.
\newblock {\em SIAM Journal on Computing} 43(2):930--972.

\bibitem[\protect\citeauthoryear{Babaioff, Immorlica, and
  Kleinberg}{2007}]{babaioff2007matroids}
Babaioff, M.; Immorlica, N.; and Kleinberg, R.
\newblock 2007.
\newblock Matroids, secretary problems, and online mechanisms.
\newblock In {\em SODA},  434--443.

\bibitem[\protect\citeauthoryear{Esfandiari \bgroup et al\mbox.\egroup
  }{2017}]{esfandiari2015prophet}
Esfandiari, H.; Hajiaghayi, M.; Liaghat, V.; and Monemizadeh, M.
\newblock 2017.
\newblock Prophet secretary.
\newblock {\em SIAM Journal on Discrete Mathematics} 31(3):1685--1701.

\bibitem[\protect\citeauthoryear{Feldman, Svensson, and
  Zenklusen}{2015}]{FSZ-SODA15}
Feldman, M.; Svensson, O.; and Zenklusen, R.
\newblock 2015.
\newblock {A simple O (log~log~(rank))-competitive algorithm for the matroid
  secretary problem}.
\newblock In {\em SODA},  1189--1201.

\bibitem[\protect\citeauthoryear{Gittins and Jones.}{1974}]{GJ74}
Gittins, J., and Jones., D.
\newblock 1974.
\newblock A dynamic allocation index for the sequential design of experiments.
\newblock In {\em Progress in Statistics},  241--266.

\bibitem[\protect\citeauthoryear{Goel and Mehta}{2008}]{GM-SODA08}
Goel, G., and Mehta, A.
\newblock 2008.
\newblock Online budgeted matching in random input models with applications to
  adwords.
\newblock In {\em SODA},  982--991.

\bibitem[\protect\citeauthoryear{Guruganesh and Singla.}{2017}]{GS-IPCO17}
Guruganesh, G.~P., and Singla., S.
\newblock 2017.
\newblock Online matroid intersection: Beating half for random arrival.
\newblock In {\em IPCO},  241--253.

\bibitem[\protect\citeauthoryear{Hajiaghayi, Kleinberg, and
  Parkes}{2004}]{hajiaghayi2004adaptive}
Hajiaghayi, M.~T.; Kleinberg, R.; and Parkes, D.~C.
\newblock 2004.
\newblock Adaptive limited-supply online auctions.
\newblock In {\em ACM EC},  71--80.

\bibitem[\protect\citeauthoryear{Karande, Mehta, and
  Tripathi}{2011}]{KMT-STOC11}
Karande, C.; Mehta, A.; and Tripathi, P.
\newblock 2011.
\newblock Online bipartite matching with unknown distributions.
\newblock In {\em STOC},  587--596.

\bibitem[\protect\citeauthoryear{Kesselheim \bgroup et al\mbox.\egroup
  }{2013}]{kesselheim2013optimal}
Kesselheim, T.; Radke, K.; T{\"o}nnis, A.; and V{\"o}cking, B.
\newblock 2013.
\newblock An optimal online algorithm for weighted bipartite matching and
  extensions to combinatorial auctions.
\newblock In {\em ESA},  589--600.

\bibitem[\protect\citeauthoryear{Kleinberg and Weinberg}{2012}]{KW-STOC12}
Kleinberg, R., and Weinberg, S.~M.
\newblock 2012.
\newblock Matroid prophet inequalities.
\newblock In {\em STOC},  123--136.

\bibitem[\protect\citeauthoryear{Korula and P{\'a}l}{2009}]{KorulaPal-ICALP09}
Korula, N., and P{\'a}l, M.
\newblock 2009.
\newblock {Algorithms for secretary problems on graphs and hypergraphs}.
\newblock In {\em ICALP}.
\newblock  508--520.

\bibitem[\protect\citeauthoryear{Krengel and
  Sucheston}{1977}]{krengel1977semiamarts}
Krengel, U., and Sucheston, L.
\newblock 1977.
\newblock Semiamarts and finite values.
\newblock {\em Bulletin of the American Mathematical Society}.

\bibitem[\protect\citeauthoryear{Krengel and
  Sucheston}{1978}]{krengel1978semiamarts}
Krengel, U., and Sucheston, L.
\newblock 1978.
\newblock On semiamarts, amarts, and processes with finite value.
\newblock {\em Advances in Prob} 4:197--266.

\bibitem[\protect\citeauthoryear{Lachish}{2014}]{Lachish-FOCS14}
Lachish, O.
\newblock 2014.
\newblock O (log log rank) competitive ratio for the matroid secretary problem.
\newblock In {\em FOCS},  326--335.

\bibitem[\protect\citeauthoryear{Mahdian and Yan}{2011}]{MY-STOC11}
Mahdian, M., and Yan, Q.
\newblock 2011.
\newblock Online bipartite matching with random arrivals: an approach based on
  strongly factor-revealing lps.
\newblock In {\em STOC},  597--606.

\bibitem[\protect\citeauthoryear{Mohammad~Taghi~Hajiaghayi and
  Sandholm.}{2007}]{hajiaghayi2007automated}
Mohammad~Taghi~Hajiaghayi, R.~K., and Sandholm., T.
\newblock 2007.
\newblock Automated online mechanism design and prophet inequalities.
\newblock In {\em AAAI},  58--65.

\bibitem[\protect\citeauthoryear{Paul~Duetting and
  Lucier.}{2017}]{duetting2017prophet}
Paul~Duetting, Michal~Feldman, T.~K., and Lucier., B.
\newblock 2017.
\newblock Prophet inequalities made easy: Stochastic optimization by pricing
  non-stochastic inputs.
\newblock In {\em FOCS},  540--551.

\bibitem[\protect\citeauthoryear{Robert~Kleinberg and Weyl.}{2016}]{KWW-16}
Robert~Kleinberg, B.~W., and Weyl., E.~G.
\newblock 2016.
\newblock Descending price optimally coordinates search.
\newblock In {\em ACM EC},  23--24.

\bibitem[\protect\citeauthoryear{Samuel-Cahn}{1984}]{samuel1984comparison}
Samuel-Cahn, E.
\newblock 1984.
\newblock Comparison of threshold stop rules and maximum for independent
  nonnegative random variables.
\newblock In {\em the Annals of Probability},  12(4):1213--1216.

\bibitem[\protect\citeauthoryear{Singla.}{2018}]{Singla18}
Singla., S.
\newblock 2018.
\newblock The price of information in combinatorial optimization.
\newblock In {\em SODA},  2523--2532.

\bibitem[\protect\citeauthoryear{Weitzman.}{1979}]{weitzman79}
Weitzman., M.~L.
\newblock 1979.
\newblock Optimal search for the best alternative.
\newblock In {\em Econometrica},  47(3):641--654.

\bibitem[\protect\citeauthoryear{Yan}{2011}]{yan2011mechanism}
Yan, Q.
\newblock 2011.
\newblock Mechanism design via correlation gap.
\newblock In {\em SODA},  710--719.

\end{thebibliography}
\end{document}